\documentclass[pdflatex,sn-aps,Numbered]{sn-jnl}

\usepackage{graphicx}%
\usepackage{multirow}%
\usepackage{amsmath,amssymb,amsfonts}%
\usepackage{amsthm}%
\usepackage{mathrsfs}%
\usepackage[title]{appendix}%
\usepackage{xcolor}%
\usepackage{textcomp}%
\usepackage{manyfoot}%
\usepackage{booktabs}%
\usepackage{algorithm}%
\usepackage{algorithmicx}%
\usepackage{algpseudocode}%
\usepackage{listings}%
\usepackage{cuted}
\usepackage{mathtools}

\theoremstyle{thmstyleone}%
\newtheorem{theorem}{Theorem}
\newtheorem{proposition}[theorem]{Proposition}%

\theoremstyle{thmstyletwo}%

\theoremstyle{thmstylethree}%
\newtheorem{definition}{Definition}%

\begin{document}

\title[]{Why dimensional analysis works: general classification of self-similarity based on scale-invariance}


\author*[1]{\fnm{Hirokazu} \sur{Maruoka}}\email{hirokazu.maruoka@oist.jp,hmaruoka1987@gmail.com}

\affil*[1]{\orgdiv{Nonlinear and Non-equilibrium Physics Unit}, \orgname{Okinawa Institute of Science and Technology (OIST)}, \orgaddress{\street{Tancha, Onna-son, Kunigami-gun} \city{Okinawa}, \postcode{904-0495},  \country{Japan}}}


\abstract{In this work, we formulate self-similarity from the perspective of scale invariance, where a self-similar form is understood as the transformation of a function into a form invariant under scale transformations. By applying this formulation to physical parameters, which consist of numerical values and units, it is demonstrated that dimensional analysis works for physical problems because scale invariance is partially shared between units and physical parameters. This naturally leads to the distinction between similarity of the first kind and similarity of the second kind according to whether the scale functions induced by units and those associated with physical parameters are equivalent or not. Self-similar solutions of the second kind can be further classified according to whether the power exponents of the similarity parameters include functions of dimensionless numbers. This leads to the conclusion that there are three kinds of self-similar solutions. The present work provides a unified framework for understanding dimensional analysis and a universal classification of self-similarity in physical problems. }

\maketitle

\section{Introduction  \label{sec:Introduction} }

Self-similarity is a feature that does not change under scale transformations. It is a fundamental concept in physics
\cite{Galilei1638,Maxwell1873,Rayleigh1915,Mandelbrot1983,Barenblatt1996, Barenblatt2003, Goldenfeld1992}.
Every physical system involves self-similarity \cite{Baumchen2013,Benzaquen2013,BoakyeAnsah2021,Zhou2026}, as long as it involves
physical quantities. Self-similarity appears as scale invariance. Data points collapse onto lower-dimensional manifolds by introducing similarity parameters, which is called data collapse \cite{Cabella2011,Yokota2011,Okumura2020,Dresselhaus2022,Watanabe25}. This concept leads to dimensional analysis \cite{Bertrand1878,Riabouchinsky1911,Federman1911,Buckingham1914,Gibbings2011}, which
allows one to obtain solutions without directly solving the problems \cite{Taylor1950, Diaz2021}.
Where there is self-similarity, there are scaling laws or power laws.
They bridge theory and experimental observations, and provide
useful information for engineering science. Scaling laws appear in many
fields of science, including condensed matter \cite{Widom1963, Holmes1971, Ley-Koo1977},
fluid mechanics \cite{deGennes2004,Barenblatt2014,Mahesh2017}, soft matter
\cite{Flory1953, deGennes1979, Maruoka2025b}, contact mechanics \cite{Maruoka23,Yokoyama2026}, allometry \cite{West1997}, fractal theory \cite{ElNabulsi2023,Samy2025,Samy2026},
and machine learning \cite{kaplan2020}.

It was later recognized that there are two kinds of self-similarity in terms of the convergence of similarity parameters \cite{Guderley1942, Weizsacker1954}. Some problems reveal divergence in the limit of the similarity parameters,
but convergence is recovered by introducing new similarity variables with anomalous dimensions. Zel'dovich formalized such problems as self-similarity of the second kind \cite{Zeldovich1956}, and this concept was later developed through the introduction of intermediate asymptotics \cite{Zeldovich1972}.

According to the recipe of Barenblatt \cite{Barenblatt1996,Barenblatt2003}, similarity of the first kind and similarity of the second kind can be distinguished by the convergence of the self-similar form in the limit of the parameters. Suppose that there is a physical function, $y = f\left(t,x,z \right)$. By applying dimensional analysis, a dimensionless function is obtained as $\Pi = \Phi\left(\zeta, \eta \right)$, where $\Pi = y/t^\alpha$, $\zeta = x/t^\beta$, and $\eta = z/t^\gamma$. If $\Phi$ converges to a finite limit as $\eta$ and $\zeta$ go to zero or infinity, the problem turns out to be a similarity of the first kind. On the other hand, if $\Phi$ does not converge to a finite limit as $\eta$ and $\zeta$ go to zero or infinity, but the convergence is recovered by introducing a similarity variable $\zeta / \eta^{\delta}$, it turns out to be a similarity of the second kind. According to this classification, it seems that the similarity of the first kind or the second kind depends on the scale invariance of the similarity parameters obtained by dimensional analysis.

On the other hand, Eggers and Fontelos proposed a slightly different formulation \cite{Eggers2015}. They classified self-similarity according to whether there exists a self-similar solution for a particular pair of values of the power exponents or there exist self-similar solutions for a continuous range of exponents to be determined by regularity conditions. In this formulation, dimensional analysis does not play a central role; rather, whether the power exponents characterizing the self-similar parameters are fixed or not determines the kind of self-similarity. However, this formulation clearly casts light on the point that the power exponents are determined through regularity conditions in the second kind.

These two formulations definitely overlap, though they are not completely equivalent. This suggests that another category of self-similar solutions is possible. In this paper, I formulate self-similarity based on scale invariance. It will be demonstrated that similarity of the first kind and similarity of the second kind can be easily understood by examining how the scale invariance of units and that of physical parameters overlap. This comes from the structure of physical parameters that can be represented as products of numerical values and units. There are two kinds of scale transformations, which naturally lead to the distinction between similarity of the first kind and similarity of the second kind. Self-similarity of the second kind can be defined as problems for which the scale invariance of units and that of physical parameters are not equal. We will see that similarity of the second kind bifurcates according to whether the power exponents of the similarity parameters constitute functions of dimensionless numbers or not. Summarizing these formulations, we will see that there are three kinds of self-similar solutions.

This insight also addresses a fundamental question of self-similarity: {\it why} and {\it how} does dimensional analysis actually work for physical problems despite the fact that it only considers units? Units are defined in the observation and should have nothing to do with the observed phenomena. Dimensional analysis provides physical insight into the observed phenomena from the structure of units. Existing treatments and books explain dimensional analysis in terms of dimensional homogeneity or scale invariance, but provide little discussion of why information about physical scaling can be inferred from units alone \cite{Bertrand1878,Riabouchinsky1911,Federman1911,Buckingham1914,Bridgman1922,Ipsen1960,Olver1986,Barenblatt1996,Barenblatt2003,Gibbings2011,Hehl2019}. This is counter-intuitive. However, it is easily understood by carefully observing the structure of physical parameters, which consists of numerical values and units. The reason why dimensional analysis is effective for physical problems is derived from the indistinguishability of the action of scale transformations on numerical values. This work addresses this question. 

The purpose of this paper is to clarify why dimensional analysis works for physical systems and to classify self-similarity from this viewpoint. The rest of this paper is organized as follows. In Sec. \ref{sec:SII}, we first show three cases of self-similar solutions that reveal different types of self-similarity. In Sec. \ref{sec:SIII}, we formulate scale transformations and the exploitation of self-similarity in terms of scale invariance. In Sec. \ref{sec:SIV}, we introduce the form of physical parameters. The introduction of physical parameters leads to two types of scale transformations acting on physical parameters, which leads to the distinction between similarity of the first kind and similarity of the second kind. In Sec. \ref{sec:SV}, we further formulate self-similarity of the second kind depending on whether the scale functions include dimensionless variables. In Sec. \ref{sec:SVI}, we summarize the formulation of self-similarity. In this paper, we use the term self-similarity in a broader sense as invariance under scale transformations, not necessarily restricted to asymptotic solutions of PDEs.

\section{Three kinds of self-similar solutions\label{sec:SII}}

Here we would like to show examples of three kinds of self-similar solutions, which will be classified later. Any self-similar solution can be transformed into a form invariant under scale transformations through a transformation of parameters. We refer to this invariant form as a {\it self-similar form}. The introduced parameters in self-similar forms called {\it similarity parameters} or {\it similarity variables}. The function appearing in a self-similar form is called a {\it scaling function}. By examining these self-similar forms, we easily find that there are three kinds of self-similar solutions based on the forms of the similarity parameters.

---Case I. Suppose that we have the following diffusion equation \cite{Atkins2006},
\begin{equation}
\partial_t u = D \partial_{xx} u \label{eq:e1a}
\end{equation}
with the boundary condition $x\in(-\infty, \infty)$ and the initial condition $u\left(0,x \right) = Q \delta\left( x \right)$. Its solution can be expressed as
\begin{equation}
u = f \left(t, x, D, Q \right)\label{eq:e2a}
\end{equation}
where $u$ is the concentration of the substance, $t$ is time, $x$ is the spatial coordinate, $D$ is the diffusion coefficient, and $Q$ is the total amount. Its self-similar form can be obtained by simply applying dimensional analysis. In $LMT$ units, the dimensions of the physical parameters are $\left[ u \right] = L^{-1}M$, $[t] = T$, $[x] = L$, $[D] = L^{2}T^{-1}$, and $[Q] = M$. Therefore, we obtain the following self-similar form,
\begin{equation}
\frac{ u \sqrt{Dt}}{Q} = f \left( \frac{x}{\sqrt{Dt}}\right). \label{eq:e3a}
\end{equation}
The data points collapse by plotting the variables in Eq.~\ref{eq:e3a}, which shows that all scale invariance has been exploited in this expression, as shown in Fig.~\ref{fig:F1}. Note that the self-similar form is obtained simply by applying dimensional analysis.

\begin{figure}
\centering
  \includegraphics[width=8.5 cm]{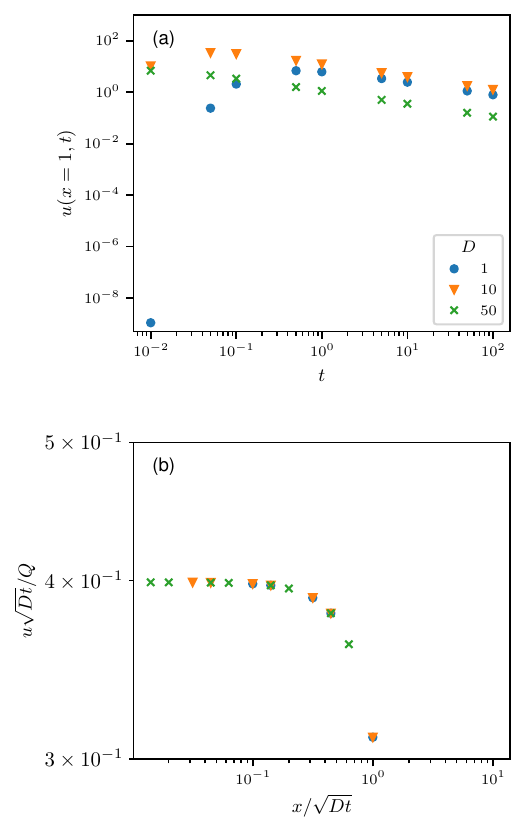} %
 \caption{Plots of a one-dimensional diffusion process for various diffusion coefficients, $D = 1, 10, 50$:
(a) the concentration of the substance at $x=1$ as a function of time $t$;
(b) data collapse obtained by plotting the similarity parameters, $\frac{u \sqrt{Dt}}{Q}$ versus $\frac{x}{\sqrt{Dt}}$.
 } 
  \label{fig:F1}
\end{figure}

---Case II. Let us suppose the dynamical impact of a solid sphere onto a viscoelastic board \cite{Maruoka23}. In this problem, the physical parameters are expressed as
\begin{equation}
\delta_m = f \left( R, h, \rho, \mu, E, v_i   \right) \label{eq:e4a}
\end{equation}
where $\delta_m$ is the maximum deformation, $R$ is the radius of the sphere, $h$ is the thickness of the board, $\rho$ is the density of the sphere, $\mu$ and $E$ are the viscous coefficient and the elastic modulus of the viscoelastic board, respectively, and $v_i$ is the impact speed. The dimensions of the physical parameters are $[\delta_m] = L$, $[R]= L$, $[h] = L$, $[\rho] = L^{-3} M$, $[\mu]=L^{-1} M T^{-1}$, $[E]= L^{-1}MT^{-2}$, and $[v_i]= LT^{-1}$.

In this problem, by applying dimensional analysis, we obtain
\begin{equation}
\frac{\delta_m}{R} =f \left( \frac{h}{R}, \frac{\mu}{E^{1/2} \rho^{1/2 }R}, \frac{\rho v_i^2}{E}  \right). \label{eq:e5a}
\end{equation}
However, the data points do not collapse by using the variables in Eq.~\ref{eq:e5a}. This is because the dimensional analysis alone does not reduce the problem to a relation between two similarity parameters. Actually, the self-similar form of this problem is
\begin{equation}
\frac{\Pi^3}{\kappa \eta} = \Phi \left( \frac{\Pi}{\theta \eta^{1/2}  }\right) \label{eq:e6a}
\end{equation}
where $\Pi = \frac{\delta_m}{R}$, $\kappa= \frac{h}{R}$, $\theta = \frac{\mu}{E^{1/2} \rho^{1/2 }R}$, and $\eta =\frac{\rho v_i^2}{E}$. This self-similar form was obtained by considering its physical model, the Maxwell viscoelastic foundation model. Note that the self-similar form was not obtained by dimensional analysis. However, the power exponents of the similarity variables are constant. This is the characteristic feature of this case.

---Case III. The modified groundwater flow problem \cite{Barenblatt2003}. Groundwater filters through porous media with fissurized rock, where water is absorbed into the rock. Under this condition, the groundwater behavior is described by the following modified porous medium equation,
\begin{equation}
\partial_t H = \kappa \partial_{xx} H^2 - 2c \kappa \left(\partial_x H \right)^2
\end{equation}
where $H$ is the groundwater head, $\kappa$ is a constant consisting of the permeability coefficient, the viscous coefficient, and the porosity, and $c$ is a dimensionless parameter consisting of the porosity and the fissure porosity. The solution is expressed as
\begin{equation}
H = f \left(t, x, \kappa, I, l, c \right).
\end{equation}
where $l$ is the initial section, and $I$ is the total head defined as $I = \int_{-l}^l H(x,t),dx$. The dimensions of the physical parameters are $
\left[H \right] = \frac{M}{LT^2}$,~$\left[ t\right] = T$,~$\left[x \right]=L,$ $\left[ \kappa \right]= \frac{L^3T}{M} $~$\left[I \right]=\frac{M}{T^2}$,
$\left[l \right]=L$, $[c]=1 $.  
Applying dimensional analysis, we obtain
\begin{equation}
    \Pi = f\left(\zeta, \eta, c \right) \label{eq:e10a}
\end{equation}
where
\begin{equation}
\Pi = \frac{H \left(\kappa t\right)^{1/3}}{I^{2/3}}, \qquad
\zeta =\frac{x}{\left(I \kappa t \right)^{1/3}}, \qquad
\eta = \frac{l}{\left(I \kappa t \right)^{1/3}}. \nonumber
\end{equation}
In this problem, the self-similar form cannot be obtained simply by applying dimensional analysis, as in Case II. However, the final self-similar form is expressed as follows:
\begin{equation}
\frac{\Pi}{\eta^{\frac{2c}{3-2c} }} = \Phi \left( \frac{\zeta}{\eta^{\frac{c}{3-2c}}} \right).
\end{equation}
Note that the power exponents of the variables constitute functions of the dimensionless parameter $c$. 

Comparing these three self-similar forms, it is clear that there are three kinds of self-similar solutions. However, this classification has not been well founded. According to the formulation of Barenblatt, in which the distinction based on whether the self-similar forms are obtained by dimensional analysis or not is emphasized \cite{Barenblatt1996,Barenblatt2003,Barenblatt2014}, Case I is clearly a self-similar solution of the first kind and Case III is introduced as a self-similar solution of the second kind, but Case II should also be a self-similar solution of the second kind. On the other hand, following the formulation of Eggers and Fontelos, in which solutions are distinguished according to whether the power exponents are constant or take values in a continuous range \cite{Eggers2015}, Case II should be a self-similar solution of the first kind.

This apparent confusion originates from different perspectives on self-similarity. In the next section, we formulate the classification based on scale invariance.

\section{Self-similarity as scale-invariance \label{sec:SIII} }

In this section, I formulate self-similarity in terms of the scale invariance of a system. The formulation presented here is based on Barenblatt \cite{Barenblatt1996,Barenblatt2003} However, instead of restricting the analysis to dimensions and units, we generalize it to arbitrary physical parameters endowed with scale transformations. This section clarifies what the transformation into a self-similar form means. We present the essential theorems, while their proofs are given in the appendices.

\subsection{Scale transformation and scale functions}

Let us suppose that there is a following implicit function,
\begin{equation}
    F\left(x_1,\cdots, x_N \right)=0, \label{eq:e1}
\end{equation}
where $x_1, \cdots, x_N$ are parameters. 

A scale transformation is a transformation that enlarges (increases) or shrinks (diminishes) parameters by scale factors that are the same in all directions. It is expressed as follows:
\begin{eqnarray}
x_i \mapsto x'_i &=& \sigma_i\left(A_1, \cdots, A_k \right) x_i \nonumber \\
&=& A_1^{\alpha_{i1}}\cdots A_k^{\alpha_{ik}} x_i 
\label{eq:e2}
\end{eqnarray}
where $A_1,\cdots, A_k$ are scale factors or simply factors, and $\sigma_i(A_1,\cdots,A_k)$ is the function determining the factor acting on $x_i$.
We call $\sigma_i$ a {\it scale function}. The important point is that scale functions are always power-law monomials. This is proved in Appendix \ref{sec:appendix_A}.

\subsection{Functionally independent scale functions}

After defining scale transformations, we consider the transformation of a function into its self-similar form, which is invariant under its scale transformation. In order to understand this transformation, we introduce the notion of functional independence and a relevant theorem.

Now we define functionally independent scale functions. Suppose that there are scale functions $\sigma_1,\cdots, \sigma_N$ for parameters $x_1, \cdots, x_N$. If a scale function $\sigma_i$ cannot be expressed as a function of the other scale functions, $\sigma\left(\sigma_1, \cdots,\sigma_{i-1},\sigma_{i+1},\cdots, \sigma_N \right)$, then it is called a {\it functionally independent scale function}. On the other hand, if it can be expressed as a function of the other scale functions, it is called a {\it functionally dependent scale function}\footnote{Functional dependence generally implies that a function can be expressed as a function of other functions. See pp. 84--87 in Ref.~\cite{Olver1986}.}.

In $LMT$ units, the dimension functions, which determine the factors acting on the parameters under scale transformations of units, for force $F$, density $\rho$, and velocity $v$ are $\phi_F = LMT^{-2}$, $\phi_\rho = L^{-3}M$, and $\phi_{v}=LT^{-1}$. If they are functionally dependent, one of them should be expressible as a scale function of the others,
\begin{eqnarray}
\phi_F &=& \phi_\rho^x \phi_v^y  \nonumber \\
\iff LMT^{-2} &=& \left(L^{-3} M \right)^x \left( LT^{-1}\right)^y. \nonumber
\end{eqnarray}
Thus, there should be the following algebraic equations,
\begin{eqnarray}
L&:& -3x +y =1 \nonumber \\
M&:& x =1 \nonumber \\
T&:& y=2. \nonumber
\end{eqnarray}
However, there is no solution that satisfies these equations. Therefore, the dimension functions $\phi_F$, $\phi_\rho$, and $\phi_v$ are functionally independent scale functions.

After defining functionally independent scale functions, we arrive at the following theorem.

\begin{theorem}\label{thm:2-3}
A scale transformation is always possible such that any quantity, say $x_1$, in a set of parameters with functionally independent scale functions $x_1,\cdots, x_k$ changes its numerical value by a specified factor $A$ while the other quantities remain unchanged.
\end{theorem}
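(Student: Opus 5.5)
The plan is to reduce the statement to linear algebra on exponents, which is the standard Barenblatt argument. By the result of Appendix~\ref{sec:appendix_A}, every scale function is a power-law monomial, $\sigma_i = A_1^{\alpha_{i1}}\cdots A_k^{\alpha_{ik}}$. Taking logarithms, set $a_j=\ln A_j$ and write each scale function through its exponent row vector $\boldsymbol{\alpha}_i=(\alpha_{i1},\dots,\alpha_{ik})$, so that $\ln\sigma_i=\boldsymbol{\alpha}_i\cdot\mathbf a$. The first step is to show that functional independence of $\sigma_1,\dots,\sigma_k$ in the sense just defined is equivalent to linear independence of the vectors $\boldsymbol{\alpha}_1,\dots,\boldsymbol{\alpha}_k$.

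For the direction we need, suppose the vectors were linearly dependent, say $\boldsymbol{\alpha}_1=\sum_{i\ge2}c_i\boldsymbol{\alpha}_i$. Then $\sigma_1=\prod_{i\ge 2}\sigma_i^{c_i}$, so $\sigma_1$ would be a function of the others, contradicting functional independence. The same conclusion holds if some other index carries a nonzero coefficient, after relabeling. This mirrors the $LMT$ example above, where functional dependence reduced exactly to solvability of a linear system for the exponents.

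Next I would solve for the factors. We require $\sigma_1=A$ and $\sigma_i=1$ for $i=2,\dots,k$. In logarithms this is the linear system $M\mathbf a=(\ln A,0,\dots,0)^{T}$, where $M$ is the matrix with rows $\boldsymbol{\alpha}_i$. Since these rows are linearly independent, $M$ has full row rank $k$, which forces the number of factors to be at least $k$. A full-row-rank system always has a solution $\mathbf a$ (for example $\mathbf a = M^{T}(MM^{T})^{-1}(\ln A,0,\dots,0)^T$), and we then put $A_j=e^{a_j}>0$. Under this transformation, $x_1\mapsto A x_1$ while $x_2,\dots,x_k$ are unchanged, which proves the statement.

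The main obstacle is the equivalence in the first step: "not expressible as any function of the others" must be matched to "not a monomial combination". The direction I need is safe, since linear dependence yields an explicit monomial expression and hence functional dependence. I also have to be careful that the scale factors are positive reals, so that logarithms are legitimate; this is implicit in the definition of scale transformations.
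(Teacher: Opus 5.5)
Your proposal is correct and follows essentially the same route as the paper's proof in Appendix~B. Both take logarithms of the monomial scale functions to get a linear system in $\ln A_1,\dots,\ln A_k$, and both observe that insolubility would make the exponent row of $\sigma_1$ a linear combination of the others, i.e.\ $\sigma_1=\sigma_2^{c_2}\cdots\sigma_k^{c_k}$, contradicting functional independence; your version just packages this as full row rank of the exponent matrix with the explicit solution $M^{T}(MM^{T})^{-1}b$.
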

The proof of this theorem is given in Appendix \ref{sec:appendix_B}. This theorem implies that there exist scale transformations that change parameters associated with functionally independent scale functions while the other parameters do not change. It means that any function can be transformed into a form in which parameters with functionally dependent scale functions do not change, while parameters with functionally independent scale functions change.

\subsection{Exploitation of scale-invariance}\label{subsec:D}

By using the fact that scale functions are power-law monomials and Theorem \ref{thm:2-3}, we can deduce the transformation of a function into a form invariant under its scale transformations.

Let us consider an implicit function with two kinds of governing parameters, $x_1,\cdots, x_k$ and $y_1,\cdots, y_m$,
\begin{equation}
    F\left(x_1,\cdots,x_k, y_1, \cdots, y_m \right) = 0, \label{eq:e18}
\end{equation}
where these parameters have the scale functions,
\begin{equation}
    F\left(\sigma_1 x_1,\cdots,\sigma_k x_k, \varsigma_1 y_1, \cdots, \varsigma_m y_m \right)=0. \label{eq:e19}
\end{equation}
Here there are two kind of parameters: 
\begin{itemize}
    \item[I] $x_1,\cdots,x_k$ : parameters with functionally independent scale functions, $\sigma_1,\cdots,\sigma_k$.
    \item[II] $y_1, \cdots, y_m$ : parameters with functionally dependent scale functions, $\varsigma_1,\cdots,\varsigma_m$.
\end{itemize}

Now let us rewrite the function in Eq.~\ref{eq:e18} in a form that remains unchanged, or {\it invariant}, under scale transformations.

As $\varsigma_1,\cdots,\varsigma_m$ are functionally dependent, we obtain,
\begin{eqnarray}
 \varsigma_1 &=& \sigma_1^{\alpha_1}\cdots\sigma_k^{\kappa_1} \nonumber \\
 \varsigma_2 &=& \sigma_1^{\alpha_2}\cdots\sigma_k^{\kappa_2} \nonumber \\
 \vdots \nonumber \\
 \varsigma_m &=& \sigma_1^{\alpha_m}\cdots\sigma_k^{\kappa_m}. \label{eq:e20}
\end{eqnarray}
Here we introduce the following parameters,
\begin{equation}
 \Pi_1 = \frac{y_1}{x_1^{\alpha_1}\cdots x_k^{\kappa_1}},\cdots,\Pi_m = \frac{y_m}{x_1^{\alpha_m}\cdots x_k^{\kappa_m}}. \label{eq:e21}
\end{equation}
Indeed, we reformulate the function of Eq.~\ref{eq:e18},
\begin{eqnarray}
 F\left(x_1,\cdots,x_k, \Pi_1 x_1^{\alpha_1}\cdots x_k^{\kappa_1},\cdots, \Pi_m x_1^{\alpha_m}\cdots x_k^{\kappa_m}\right) \nonumber \\
 \iff F\left(x_1,\cdots,x_k, \Pi_1,\cdots, \Pi_m \right). \label{eq:e22}
\end{eqnarray}
Now, according to Theorem \ref{thm:2-3}, there exists a scale transformation such that $x_1,\cdots, x_k$, which belong to the set of parameters with functionally independent scale functions, change their numerical values by arbitrary factors while the other quantities remain unchanged. Therefore, we obtain,
\begin{equation}
 F\left(x_1,\cdots,x_k, \Pi_1,\cdots, \Pi_m \right) = \Phi\left(\Pi_1,\cdots, \Pi_m \right) = 0. \label{eq:e23}
\end{equation}
\begin{proposition}\label{prop:2-4}
The reformulated function $\Phi$ in Eq.~\ref{eq:e23} remains unchanged, or invariant, under scale transformations.
\end{proposition}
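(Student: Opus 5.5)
The plan is to show that every $\Pi_j$ defined in Eq.~\ref{eq:e21} is itself invariant under an arbitrary scale transformation. Since $\Phi$ depends only on $\Pi_1,\cdots,\Pi_m$, its value, and the relation $\Phi=0$, will then be unchanged. The argument has three steps.

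\textbf{Step 1: transformation of the $\Pi_j$.} Apply a general scale transformation, Eq.~\ref{eq:e19}, with $x_i\mapsto\sigma_i x_i$ and $y_j\mapsto\varsigma_j y_j$. All scale functions are power-law monomials in the factors $A_1,\cdots,A_k$ (Appendix~\ref{sec:appendix_A}). Hence
\begin{equation}
\Pi_j \mapsto \Pi'_j=\frac{\varsigma_j y_j}{(\sigma_1x_1)^{\alpha_j}\cdots(\sigma_kx_k)^{\kappa_j}}
=\frac{\varsigma_j}{\sigma_1^{\alpha_j}\cdots\sigma_k^{\kappa_j}}\,\Pi_j .
\end{equation}

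\textbf{Step 2: cancellation of the prefactor.} By the functional dependence relations, Eq.~\ref{eq:e20}, the prefactor equals one identically in $A_1,\cdots,A_k$. Therefore $\Pi'_j=\Pi_j$ for every $j$ and every choice of factors.

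\textbf{Step 3: invariance of $\Phi$.} This step justifies the passage from Eq.~\ref{eq:e22} to Eq.~\ref{eq:e23}. Write $G(x_1,\cdots,x_k,\Pi_1,\cdots,\Pi_m)=0$ for the reformulated relation. By Theorem~\ref{thm:2-3}, for each $i$ there is a scale transformation that multiplies $x_i$ by an arbitrary factor $A$ while leaving the other $x$'s fixed. By Step 2 it also leaves every $\Pi_j$ fixed. The relation must hold before and after the transformation, so $G$ takes the same value at $x_i$ and at $Ax_i$ for all $A>0$. Hence $G$ does not depend on $x_i$. Repeating this for $i=1,\cdots,k$ shows that $G$ reduces to a function $\Phi(\Pi_1,\cdots,\Pi_m)$. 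Because each argument of $\Phi$ is invariant, $\Phi$ is invariant under any scale transformation, not only the special ones supplied by Theorem~\ref{thm:2-3}.

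\textbf{Main obstacle.} The hardest point is to make rigorous the claim that the dependence of $G$ on the $x_i$ drops out. The relation $F=0$ is an implicit relation, not an identity in all variables, so one must argue on the solution set. The argument I would try first is this: the set of admissible points is mapped to itself by the scale group, and the group acts transitively on the $x$-coordinates while fixing the $\Pi$-coordinates. Consequently the solution set is a cylinder over a subset of $\Pi$-space, and that subset is the zero set defining $\Phi$.

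A lesser point is positivity: the $x_i$ must be nonzero so that the monomials make sense. This is implicit in the setting and I would state it as an assumption.
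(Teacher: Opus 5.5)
Your Steps 1 and 2 are exactly the paper's proof: it substitutes the transformed variables into each argument of $\Phi$ and uses the monomial relations of Eq.~\ref{eq:e20} to cancel the prefactor $\varsigma_j/(\sigma_1^{\alpha_j}\cdots\sigma_k^{\kappa_j})$, so your proposal is correct and follows the same route. Your Step 3 goes beyond the paper's proof, which treats the passage from Eq.~\ref{eq:e22} to Eq.~\ref{eq:e23} as already settled by the informal appeal to Theorem~\ref{thm:2-3} in the preceding text; your cylinder argument on the solution set, rather than the literal claim that $G$ takes equal values, is the right way to make that reduction rigorous for an implicit relation.
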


\begin{proof}
    
Now Eq.~\ref{eq:e23} does not change under the scale transformations. Using Eqs.~\ref{eq:e21}, we have
 \begin{eqnarray}
     \Phi\left(\Pi_1,\cdots, \Pi_m \right) = \Phi\left( \frac{y_1}{x_1^{\alpha_1}\cdots x_k^{\kappa_1}}, \cdots, \frac{y_1}{x_1^{\alpha_m}\cdots x_k^{\kappa_m}} \right).  \nonumber 
 \end{eqnarray}
 Applying the scale transformation, we obtain,
 \begin{equation}
     \Phi\left( \frac{\varsigma_1 y_1}{\left(\sigma_1 x_1\right)^{\alpha_1}\cdots \left(\sigma_k x_k\right)^{\kappa_1}}, \cdots, \frac{\varsigma_m y_m}{\left(\sigma_1 x_1\right)^{\alpha_m}\cdots \left(\sigma_k x_k\right)^{\kappa_m}} \right). \label{eq:Ch2:e27}
 \end{equation}
 As $\varsigma_1,\cdots,\varsigma_m$ are functionally dependent, using Eqs.~\ref{eq:e20} we have,
 \begin{eqnarray}
     \Phi\left( \frac{\varsigma_1 y_1}{\sigma_1^{\alpha_1} x_1^{\alpha_1}\cdots \sigma_k^{\kappa_1} x_k^{\kappa_1}}, \cdots, \frac{\varsigma_m y_m}{\sigma_1^{\alpha_m} x_1^{\alpha_m}\cdots \sigma_k^{\kappa_m} x_k^{\kappa_m}} \right)\nonumber \\
     =\Phi\left( \frac{\varsigma_1 y_1}{\varsigma_1 x_1^{\alpha_1}\cdots x_k^{\kappa_1}}, \cdots, \frac{\varsigma_m y_m}{\varsigma_m x_1^{\alpha_m}\cdots x_k^{\kappa_m}} \right) \nonumber \\= \Phi\left( \frac{y_1}{x_1^{\alpha_1}\cdots x_k^{\kappa_1}}, \cdots, \frac{y_1}{x_1^{\alpha_m}\cdots x_k^{\kappa_m}} \right) . \nonumber
 \end{eqnarray}
 Indeed, Eq.~\ref{eq:e23} does not change under scale transformations. 
\end{proof} 

Here I define the transformation of a physical function into a form invariant under all scale transformations of its parameters, namely the transformation from Eq.~\ref{eq:e18} to Eq.~\ref{eq:e23}, as the {\it exploitation of self-similarity}. The resulting expression, in which the original parameters are replaced by similarity parameters, represents a form in which the self-similarity of the system has been exploited. In this representation, all data points collapse onto a lower-dimensional manifold. We refer to this phenomenon as {\it data collapse}. Figure ~\ref{fig:F1} shows the comparison between the representation of physical parameters and the representation of data collapse.

\section{Physical parameters and similarity of the first kind and the second kind\label{sec:SIV}}

In this section, we introduce the structure of physical parameters into the previous formulation. It will be shown that physical parameters consist of two components, namely numerical values and units. As a consequence, there are two distinct kinds of scale transformations. This distinction naturally leads to the concepts of similarity of the first kind and similarity of the second kind.

\subsection{Physical parameters}

Here we introduce the physical parameters formulated in Ref.~\cite{Watanabe25}. The physical parameter $z$ in a system is represented as the product of a numerical value and its unit,
\begin{equation}
z_i = N\left[z_i, {\rm unit}_i \right]{\rm unit}_i,\label{eq:e25}
\end{equation}
where $N$ denotes the numerical value of $z_i$ measured in the unit $\mathrm{unit}_i$. This structure is easily understood by observing a physical parameter, e.g., $l=10~\mathrm{m}$, where $N = 10$ and $\mathrm{unit}={\rm m}$. Note that the numerical value is given by the function $N[Au, u] = A$ for any real number $A$. In the previous example, $N[10~\mathrm{m},\mathrm{m}]=10$. This function is essential for understanding why there are two kinds of self-similarity.

\subsection{Change of Units}

A scale transformation acting on units is called a change of units,
\begin{equation}
{\rm unit}_i \mapsto {\rm unit^{'}}_i = \frac{1}{\phi_i\left( L \right)} {\rm unit}_i \label{eq:e26}
\end{equation}
where $\phi_i\left(L \right)$ is called a dimension function, which determines the factors of the scale transformation of units.

$\phi_i$ is a function that determines the factors associated with the change of units. For example, $ {\rm  m} \mapsto {\rm cm} = \frac{{\rm m}}{100}$ where $L = 100$, $ \frac{{\rm  km}}{ {\rm hour}} \mapsto\frac{{\rm m}  }{{\rm sec} }  = \frac{3600}{1000}\frac{{\rm km}}{{\rm hour}}$ where $L =\frac{1000}{3600}$.

Note that physical parameters themselves do not change under scale transformations of units, e.g., $1~\mathrm{m} = 100~\mathrm{cm}$. Therefore, the following identity must be satisfied:
\begin{equation}
    z_i = \phi_i N\left[z_i, \mathrm{unit_i} \right]\frac{{\rm unit_i}}{\phi_i} =   N\left[z_i, \mathrm{unit^\prime_i} \right]{\rm unit^\prime_i}
\end{equation}
where ${\rm unit^\prime_i} ={\rm unit_i}/\phi_i$. Therefore, the function $N$ has the following relation for a change of units, 
\begin{eqnarray}
    N\left[z_i, {\rm unit}_i \right] \mapsto N\left[z_i, {\rm unit^{'}}_i \right]  = N\left[z_i, \frac{{\rm unit}_i}{\phi_i} \right] \nonumber \\
    =\phi_i  N\left[z_i, {\rm unit}_i \right]. \label{eq:e27}
\end{eqnarray}
while the identity is preserved for physical parameters,
\begin{eqnarray}
    z_i \mapsto z^{'}_i = N\left[z_i, {\rm unit^{'}}_i \right]{\rm unit^{'}}_i =N\left[z_i, \frac{{\rm unit}_i}{\phi_i} \right]\frac{{\rm unit}}{\phi_i} \nonumber \\
    = \phi_i N\left[z_i, {\rm unit}_i \right]\frac{{\rm unit}}{\phi_i} =N\left[z_i, {\rm unit}_i \right]{\rm unit} = z_i. \label{eq:e28} 
\end{eqnarray}

For example, the numerical values of the change of units of $3~{\rm m} \mapsto 300~{\rm cm}$ can be described as 
\begin{eqnarray}
    N\left[ 3~{\rm m}, {\rm m} \right] \mapsto N\left[3~{\rm m},{\rm  cm} \right] = N\left[3~{\rm m},\frac{{\rm  m}}{100} \right] \nonumber \\ 
    = 100 N\left[ 3~{\rm m}, {\rm m} \right] = 300. \nonumber
\end{eqnarray} 
while, for the physical parameter,
    \begin{eqnarray}
    N\left[ 3~{\rm m}, {\rm m} \right] {\rm m}   \mapsto N\left[3~{\rm m},{\rm  cm} \right] {\rm cm}  = N\left[3~{\rm m},\frac{{\rm  m}}{100} \right] \frac{{\rm  m}}{100} \nonumber \\
    = 100 N\left[ 3~{\rm m}, {\rm m} \right] \frac{{\rm  m}}{100} = N\left[ 3~{\rm m}, {\rm m} \right] {\rm m}.\nonumber
\end{eqnarray}

\subsection{scale transformation of physical parameters}

A scale transformation acting on physical parameters is called a scale transformation of physical parameters,
\begin{equation}
    z_i \mapsto z^{'}_i = \sigma_i z_i. \label{eq:e29}
\end{equation}

A scale transformation of physical parameters changes the physical parameters. Thus, this can be described as
\begin{eqnarray}
    N[z_i, {\rm unit}_i] {\rm unit}_i \mapsto N[z^{'}_i, {\rm unit}_i] {\rm unit}_i = N[\sigma_i z_i, {\rm unit}_i] {\rm unit}_i = \sigma_i N[z_i, {\rm unit}_i] {\rm unit}_i. \label{eq:e31}  
\end{eqnarray}
This means that the numerical value $N$ changes under a scale transformation of physical parameters as well,
\begin{equation}
    N[z_i, {\rm unit}_i] \mapsto N[z^{'}_i, {\rm unit}_i] = N[\sigma_i z_i, {\rm unit}_i]  = \sigma_i N[z_i, {\rm unit}_i].  \label{eq:e30}
\end{equation}

For example, the transformation of the numerical value and the physical parameter under the scale transformation $ 3~{\rm m} \mapsto 300~{\rm m}$ can be desribed as
\begin{equation}
    N\left[3~{\rm m},{\rm m} \right] \mapsto N\left[300~{\rm m}, {\rm m} \right] = 300. \nonumber
\end{equation}
\begin{equation}
    N\left[3~{\rm m},{\rm m} \right]  {\rm m} \mapsto N\left[300~{\rm m}, {\rm m} \right] {\rm m} = 300~{\rm m}, \nonumber
\end{equation}
respectively. Note that the difference between a scale transformation of units and a scale transformation of physical parameters appears in the physical parameters, but not in the numerical values. Numerical values do not distinguish between these two different scale transformations.

\subsection{Dimensional analysis}

Now we introduce dimensional analysis. As we have defined two scale transformations acting on different components of physical parameters, there are two ways of exploiting self-similarity. The important point is that we know the scale functions of units, while we do not know the self-similar structures of physical parameters {\it a priori}. This is because the scale functions of units are defined through observation. Therefore, we can always transform any physical function into a form in which all the scale invariance associated with units is exploited. Another crucial point is that numerical values do not distinguish between scale transformations of units and physical parameters. For example, the change of unit $3~{\rm m} \mapsto 300~{\rm cm}$ and the scale transformation of the physical parameter $3~{\rm m} \mapsto 300~{\rm m}$ are fundamentally different operation. The former does not represent any physical change, whereas the latter corresponds to a genuine physical change. However, this distinction is not reflected in the corresponding numerical values $3 \mapsto 300$. Therefore, a transformed function that is invariant under a change of units is also invariant under scale transformations of physical parameters if the scale transformations of units and physical parameters share the same structure.    

Suppose that $a_1, \cdots, a_k$ are numerical values of physical parameters with functionally independent scale functions, and $b_1,\cdots,b_m$ are numerical values of physical parameters with functionally dependent scale functions,
\begin{eqnarray}
    a_1 = N[{\rm a}_1, {\rm unit_a}_1],\cdots,a_k = N[{\rm a}_k, {\rm unit_a}_k], \nonumber \\ 
    b_1 = N[{\rm b}_1, {\rm unit_b}_1],\cdots,b_m = N[{\rm b}_m, {\rm unit_b}_m], \label{eq:e32}
\end{eqnarray}
where ${\rm a}_1, \cdots,{\rm a}_k$, ${\rm b}_1, \cdots,{\rm b}_m$, and ${\rm unit_a}_1,\cdots,{\rm unit_a}_k$, ${\rm unit_b}_1,\cdots,{\rm unit_b}_m$ are the corresponding physical parameters and their units, respectively. There are the following scale functions of units,
\begin{eqnarray}
    \phi_1 a_1 = N[{\rm a}_1, \frac{{\rm unit_a}_1}{\phi_1}],\cdots,\phi_k a_k = N[{\rm a}_k,\frac{{\rm unit_a}_k}{\phi_k}],\nonumber \\
    \varphi_1 b_1 = N[{\rm b}_1, \frac{{\rm unit_b}_1}{\varphi_1}],\cdots,\varphi_m b_m = N[{\rm b}_m, \frac{{\rm unit_b}_m}{\varphi_m}]. \label{eq:e33}
\end{eqnarray}
where $\phi_1, \cdots, \phi_k$ are functionally independent scale functions of units, and $\varphi_1,\cdots,\varphi_m$ are functionally dependent scale functions of units:
\begin{eqnarray}
\varphi_1 = \phi_1^{\alpha_1}\cdots\phi_k^{\kappa_1}, 
\cdots, 
\varphi_m = \phi_1^{\alpha_m}\cdots\phi_k^{\kappa_m}.
\label{eq:e33a}
\end{eqnarray}
There is the following function,
\begin{equation}
        F\left(a_1, \cdots, a_k, b_1,\cdots,b_m \right) = 0, \label{eq:e34}
\end{equation}
where scale transformation of units are defined as follows,
\begin{equation}
        F\left(\phi_1 a_1, \cdots,\phi_k a_k, \varphi_1 b_1,\cdots,\varphi_m b_m \right) = 0. \label{eq:e35}
\end{equation}
Now we define dimensional analysis as follows:
\begin{definition}
Dimensional analysis is the transformation of a function into a form invariant of scale transformation of {\it units} as follows,\begin{equation}
    \Phi \left(\Pi_1, \cdots, \Pi_m\right) = 0 \label{eq:e36}
\end{equation}
where
\begin{equation}
 \Pi_1 = \frac{b_1}{a_1^{\alpha_1} \cdots a_k^{\kappa_1}}, \cdots, \Pi_m = \frac{b_m}{a_1^{\alpha_m}\cdots a_k^{\kappa_m}}. \label{eq:e37}
\end{equation}
\end{definition}

When the scale functions are obtained, functions can be transformed into forms invariant under the scale transformation, as shown in Sec.\ref{subsec:D}.

Note that dimensional analysis is, in principle, the transformation of a function into a form invariant under scale transformations of units. However, the resulting form is also invariant under scale transformations of physical parameters in those parts where the scale invariance of units and that of physical parameters are equivalent.

\begin{proposition}\label{prop:p2}
If the scale invariance of physical parameters coincides with the scale invariance induced by units, then dimensional analysis exploits the scale invariance of the physical parameters.
\end{proposition}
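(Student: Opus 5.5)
The plan is to make the hypothesis precise and then reuse the computation in the proof of Proposition~\ref{prop:2-4}, transported to numerical values. I read ``the scale invariance of physical parameters coincides with the scale invariance induced by units'' as follows. For every admissible scale transformation of physical parameters $z_i \mapsto \sigma_i z_i$ there are factors for which the physical scale functions of ${\rm a}_1,\dots,{\rm a}_k$ are $\sigma_1,\dots,\sigma_k$, and those of ${\rm b}_1,\dots,{\rm b}_m$ are
\begin{equation}
\varsigma_j = \sigma_1^{\alpha_j}\cdots\sigma_k^{\kappa_j}, \qquad j=1,\dots,m, \nonumber
\end{equation}
with the \emph{same} exponents $\alpha_j,\dots,\kappa_j$ as in Eq.~\ref{eq:e33a}. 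In other words, the physical scale functions are obtained from the dimension functions by substituting $\phi_i \to \sigma_i$, and the functionally independent/dependent split is the same in both cases. By the result of Appendix~\ref{sec:appendix_A}, all scale functions are power-law monomials, so this substitution makes sense.

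The first step uses Eq.~\ref{eq:e30}. A scale transformation of physical parameters acts on the numerical values exactly as a change of units does in Eq.~\ref{eq:e27}: $a_i \mapsto \sigma_i a_i$ and $b_j \mapsto \varsigma_j b_j$. Because $N$ cannot tell the two operations apart, the relation $F(a_1,\dots,b_m)=0$ in Eq.~\ref{eq:e34} is mapped to
\begin{equation}
F(\sigma_1 a_1,\dots,\sigma_k a_k,\varsigma_1 b_1,\dots,\varsigma_m b_m)=0, \nonumber
\end{equation}
which has the same structure as Eq.~\ref{eq:e35}.

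The second step substitutes these transformed values into the dimensionless groups of Eq.~\ref{eq:e37}. This gives
\begin{equation}
\Pi_j \mapsto \frac{\varsigma_j b_j}{\sigma_1^{\alpha_j}a_1^{\alpha_j}\cdots\sigma_k^{\kappa_j}a_k^{\kappa_j}} = \Pi_j, \nonumber
\end{equation}
where the equality uses the coincidence hypothesis, by the same cancellation as in the proof of Proposition~\ref{prop:2-4}. Hence $\Phi(\Pi_1,\dots,\Pi_m)=0$ is invariant under every scale transformation of the physical parameters.

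The third step shows that this invariance is \emph{all} the invariance there is, i.e.\ that the scale invariance is exploited and not merely respected. The physical scale functions $\sigma_1,\dots,\sigma_k$ are functionally independent, since they are the images of the independent $\phi_i$ under an exponent-preserving correspondence. Theorem~\ref{thm:2-3} therefore applies: any $a_i$ can be rescaled independently while all the other quantities stay fixed. As in the passage from Eq.~\ref{eq:e22} to Eq.~\ref{eq:e23}, $F$ cannot depend on $a_1,\dots,a_k$ separately, so $\Phi$ in the $m$ variables $\Pi_j$ is the fully reduced self-similar form. This matches the paper's definition of the exploitation of self-similarity.

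I expect the main obstacle to be the precise meaning of ``coincides''. If it only meant that the two groups of transformations are isomorphic, without the exponents matching, the cancellation in the second step would fail. I would therefore state the hypothesis explicitly as equality of exponent matrices, or equivalently as the existence of a substitution $\phi_i \mapsto \sigma_i$. I would also note that if physical invariance coincides with unit invariance only on a subset of the parameters, the same argument gives invariance only for the corresponding $\Pi_j$. This partial case is what leads to similarity of the second kind.
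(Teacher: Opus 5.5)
Your proposal is correct and is essentially the paper's argument. The paper first checks that $\Phi(\Pi_1,\dots,\Pi_m)$ is invariant under changes of units via Eq.~\ref{eq:e33a}, then uses the indistinguishability of numerical values to get $\sigma_i=\phi_i$, $\varsigma_j=\varphi_j$ (Eq.~\ref{eq:e40a}) and transfers that invariance; you instead build this identification into the hypothesis as equality of exponent matrices and do the same cancellation directly on the physical transformation, while your extra step via Theorem~\ref{thm:2-3} (that the form is fully reduced) is not in the paper's proof but is consistent with Sec.~\ref{subsec:D}.
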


\begin{proof}

Applying the scale transformation of units to the nondimensionalized function in Eq.~\ref{eq:e37}, we have
\begin{equation}
    \Phi\left(\frac{\varphi_1 b_1}{(\phi_1 a_1)^{\alpha_1} \cdots ( \phi_k a_k)^{\kappa_1}}, \cdots, \frac{ \varphi_m b_m}{(\phi_1 a_1)^{\alpha_m} \cdots ( \phi_k a_k)^{\kappa_m}} \right) \label{eq:e38}
\end{equation}
As $\varphi_1, \cdots, \varphi_m$ are functionally dependent scale functions, they can be replaced as Eq.~\ref{eq:e33a}. Therefore, we obtain
\begin{eqnarray}
    \Phi\left(\frac{\varphi_1 b_1}{(\phi_1 a_1)^{\alpha_1} \cdots (\phi_k a_k)^{\kappa_1}}, \cdots, \frac{\varphi_m b_m}{(\phi_1 a_1)^{\alpha_m} \cdots (\phi_k a_k)^{\kappa_m}} \right) \nonumber \\
    =\Phi\left(\frac{\varphi_1 b_1}{ \varphi_1 a_1^{\alpha_1} \cdots a_k^{\kappa_1}}, \cdots, \frac{\varphi_m b_m}{ \varphi_m a_1^{\alpha_m} \cdots  a_k^{\kappa_m}} \right) \nonumber \\
    =\Phi\left(\frac{ b_1}{ a_1^{\alpha_1} \cdots a_k^{\kappa_1}}, \cdots, \frac{ b_m}{ a_1^{\alpha_m}\cdots a_k^{\kappa_m}} \right). \label{eq:e40}
\end{eqnarray}
Hence, Eq.~\ref{eq:e36} is invariant under scale transformation of units.

On the other hand, here we introduce the scale transformation of physical parameters corresponding to Eqs.~\ref{eq:e32} as follows,
\begin{eqnarray}
    \sigma_1 a_1 = N[\sigma_1{\rm a}_1, {\rm unit_a}_1],\cdots,\sigma_k a_k = N[\sigma_k {\rm a}_k,{\rm unit_a}_k], \nonumber \\
    \varsigma_1 b_1 = N[\varsigma_1 {\rm b}_1, {\rm unit_b}_1],
    \cdots,\varsigma_m b_m = N[\varsigma_m {\rm b}_m, {\rm unit_b}_m]. \label{eq:e39a}
\end{eqnarray}

Since the numerical value $N[{\rm a},{\rm unit}]$ does not distinguish between scale transformations acting on physical parameters and those acting on units, if the scale invariance of physical parameters coincides with the scale invariance induced by units, then the scale functions of units and those of physical parameters are equal. Therefore,
\begin{eqnarray}
    N\left[ {\rm a}_1, \frac{{\rm unit_a}_1}{\phi_1}  \right] = N\left[ \sigma_1 {\rm a}_1, {\rm unit_a}_1 \right],\cdots,N\left[ {\rm a}_k, \frac{{\rm unit_a}_k}{\phi_k}  \right] = N\left[ \sigma_k {\rm a}_k, {\rm unit_a}_k \right], \nonumber \\
    N\left[ {\rm b}_1, \frac{{\rm unit_b}_1}{\varphi_1}  \right] = N\left[ \varsigma_1 {\rm b}_1, {\rm unit_b}_1 \right],\cdots,N\left[ {\rm b}_m, \frac{{\rm unit_b}_m}{\varphi_m}  \right] = N\left[ \varsigma_m {\rm b}_m, {\rm unit_b}_m \right].
    \label{eq:e42a}
\end{eqnarray}
From the homogeneity of numerical values in Eqs.~\ref{eq:e33} and Eqs.~\ref{eq:e39a}, we have
\begin{eqnarray}
  \sigma_1 N\left[ {\rm a}_1, {\rm unit_a}_1 \right] = \phi_1 N\left[  {\rm a}_1, {\rm unit_a}_1 \right],\cdots,
  \sigma_k N\left[ {\rm a}_k, {\rm unit_a}_k \right] = \phi_kN\left[  {\rm a}_k, {\rm unit_a}_k \right],\nonumber \\
  \varsigma_1 N\left[ {\rm b}_1, {\rm unit_b}_1 \right] = \varphi_1 N\left[ {\rm b}_1, {\rm unit_b}_1 \right],\cdots, \varsigma_m N\left[  {\rm b}_1, {\rm unit_b}_1 \right] = \varphi_m N\left[ {\rm b}_1, {\rm unit_b}_1 \right].
  \label{eq:e40b}
\end{eqnarray}
Therefore we have,
\begin{eqnarray}
  \sigma_1 = \phi_1, \cdots, \sigma_k = \phi_k, \varsigma_1 = \varphi_1,\cdots, \varsigma_m = \varphi_m. \label{eq:e40a}
\end{eqnarray}
Since Eq.~\ref{eq:e36} has been shown to be invariant under the scale transformations of units $\phi_1, \cdots, \phi_k,\varphi_1,\cdots,\varphi_m$, it is also invariant under the scale transformations of physical parameters $\sigma_1,\cdots,\sigma_k,\varsigma_1,\cdots,\varsigma_m$. Therefore, if the scale invariance of units coincides with the scale invariance of physical parameters, dimensional analysis exploits scale invariance of physical parameters as well.

\end{proof}

\subsection{Similarity of the first kind and the second kind}

Dimensional analysis is fundamentally a transformation into a form invariant under scale transformations of units, but it is also applicable to scale transformations of physical parameters due to the indistinguishability of numerical values.

Proposition \ref{prop:p2} implies that dimensional analysis is fundamentally a transformation into a form invariant under scale transformations of units, but it is also applicable to scale transformations of physical parameters due to the indistinguishability of numerical values. Thus, dimensional analysis exploits all scale invariance if and only if the scale transformations induced by units and those of physical parameters coincide. Therefore, we naturally arrive at the distinction of whether the scale invariance of physical parameters and that of units are equivalent or not. Here we define similarity of the first kind and similarity of the second kind as follows.
\begin{definition}\label{dfn:d3}
Suppose that $\phi$, $\sigma$, and $\xi$ are sets of scale functions. Two sets of scale functions are regarded as equivalent if they generate the same scale invariance of the system\footnote{A formal definition of the scale invariance generated by a set of scale functions is given in Ref.~\cite{Watanabe25} or in Appendix \ref{sec:appendix_Ca}.}. If the scale transformation of units $\phi$ and the scale transformation of physical parameters $\sigma$ are equivalent, the problem is said to exhibit similarity of the first kind, or complete similarity,
\begin{equation}
    \phi \equiv \sigma. \label{eq:e41}
\end{equation}
On the other hand, if the scale transformation of units $\phi$ and the scale transformation of parameters $\sigma$ are not equivalent, but there exists an intrinsic scale transformation $\xi$ derived from $\sigma$, which cannot be generated by the scale transformation of units alone, the problem is said to exhibit similarity of the second kind, or incomplete similarity:
\begin{equation}
    \sigma \equiv \xi \phi. \label{eq:e42}
\end{equation}
where $\xi \phi$ means the scale functions obtained by combining the unit-induced scaling $\phi$ with the intrinsic scaling $\xi$.
\end{definition}

This definition of similarity of the first kind and similarity of the second kind is equivalent to that of Barenblatt \cite{Barenblatt1996,Barenblatt2003}, as discussed in Appendix \ref{sec:appendix_C}.

\subsection{Self-similarity in general}

Data collapse of general case can be formulated as follows. Consider the following functions,
\begin{equation}
    F\left(a_1,\cdots,a_k, b_j, \cdots, b_l,c_m,\cdots,c_p \right) = 0, \label{eq:e54}
\end{equation}
where $1 < k < j < l < m < p$ and there are three types of numerical values of physical parameters,
\begin{itemize}
    \item {\it Numerical values of physical parameters with functionally independent scale functions of units}
    \begin{equation}
        \phi_1 a_1 = N\left[{\rm a}_1, \frac{{\rm unit_a}_1}{\phi_1 }\right],\cdots,\phi_k a_k = N\left[{\rm a}_k,\frac{ {\rm unit_a}_k}{\phi_k}\right], \label{eq:e55}
    \end{equation}
    \item {\it Numerical values of physical parameters with functionally dependent scale functions of units and functionally independent scale functions of physical parameters}
    \begin{equation}
     \varphi_j \varsigma_j b_j = N\left[ \varsigma_j {\rm b}_j, \frac{{\rm unit_b}_j}{\varphi_j}\right],
    \cdots,\varphi_l \varsigma_j b_l = N\left[\varsigma_l{\rm b}_l, \frac{{\rm unit_b}_l}{\varphi_l}\right], \label{eq:e56}
    \end{equation}
    \item {\it Numerical values of physical parameters with functionally dependent scale functions of units and physical parameters}
    \begin{eqnarray}
        \varphi_m \xi_m c_m = N\left[\xi_m {\rm c}_m, \frac{{\rm unit_c}_m}{\varphi_m}\right],
       \cdots, \varphi_p \xi_p c_m = N\left[\xi_p {\rm c}_p, \frac{{\rm unit_c}_p}{\varphi_p}\right]. \label{eq:e57}
    \end{eqnarray}
\end{itemize}
Note that the scale functions $\phi_1, \cdots, \phi_k,\varphi_j, \cdots, \varphi_p$ can also be derived from scale transformations of physical parameters, such as $\phi_1 a_1 = N\left[ \phi_1 {\rm a}_1, {\rm unit_a}_1 \right]$. In numerical values, the origins of the scale functions are indistinguishable, whereas these origins are {\it physically} quite different. However, the scale functions $\phi_1, \cdots, \phi_k,\varphi_j, \cdots, \varphi_p$ can be obtained {\it a priori} from the units. Due to the indistinguishability of the origin of scale transformations in numerical values, we may assume that the scale functions deduced from the structure of units are also shared by the physical parameters. In order to distinguish the different operation, we denote $\sigma_\phi$ as the scale transformation of physical parameters sharing the structure with the change of units $\phi$.

Thus we have the following self-similar structure:
\begin{eqnarray}
    F\left(\phi_1 a_1,\cdots,\phi_k a_k, \varphi_j \varsigma_j b_j, \cdots,\varphi_l \varsigma_l b_l, \varphi_m \xi_m c_m,
    \cdots,\varphi_p \xi_p c_p \right) = 0.\label{eq:e58}
\end{eqnarray}
Introducing the following similarity variables,
\begin{eqnarray}
   \Pi_j = \frac{b_j}{a_1^{\alpha_j}\cdots a_k^{\iota_j}},\cdots, \Pi_l = \frac{b_l}{a_1^{\alpha_l}\cdots a_k^{\iota_l}},~\nonumber \\
   \Pi_m = \frac{c_m}{a_1^{\alpha_m}\cdots a_k^{\iota_m}},\cdots, 
   \Pi_p=\frac{c_p}{a_1^{\alpha_p}\cdots a_k^{\iota_p}}. \label{eq:e59}
\end{eqnarray}
Then we have,
\begin{eqnarray}
    F\left(a_1,\cdots,a_k, a_1^{\alpha_j}\cdots a_k^{\iota_j} \Pi_j, \cdots, a_1^{\alpha_j}\cdots a_k^{\iota_j} \Pi_p \right) \nonumber \\
    = F\left(a_1,\cdots,a_k, \Pi_j, \cdots, \Pi_p \right). \label{eq:e60}
\end{eqnarray}
As $a_1, \cdots,a_k$ are parameters with functionally independent scale functions, these parameters can be considered as independent of function $F$ according to Theorem \ref{thm:2-3}. Therefore, we obtain
\begin{equation}
    \Phi\left(\Pi_j,\cdots,\Pi_l, \Pi_m, \cdots, \Pi_p \right) = 0. \label{eq:e61}
\end{equation}
However, Eq.~\ref{eq:e61} possesses the scale transformation,
\begin{equation}
    \Phi\left(\varsigma_j \Pi_j,\cdots,\varsigma_l \Pi_l, \xi_m \Pi_m, \cdots, \xi_p \Pi_p \right) = 0. \label{eq:e62}
\end{equation}
Let us introduce the following similarity variables 
\begin{equation}
    Z_m= \frac{\Pi_m}{\Pi_j^{\kappa_m}\cdots\Pi_l^{\lambda_m} },\cdots, Z_p= \frac{\Pi_p}{\Pi_j^{\kappa_p}\cdots\Pi_l^{\lambda_p} }\label{eq:e63}
\end{equation}
where $\kappa_m,\cdots,\lambda_m,\cdots,\kappa_p,\cdots, \lambda_p$ are determined by solving following,
\begin{equation}
    \frac{\xi_m}{\varsigma_j^{\kappa_m} \cdots \varsigma_l^{\lambda_m}} =1,\cdots,\frac{\xi_p}{\varsigma_j^{\kappa_p} \dots \varsigma_l^{\lambda_p}}=1. \label{eq:e64}
\end{equation}
Introducing Eqs.~\ref{eq:e63} into Eq.~\ref{eq:e61}, then we obtain
\begin{eqnarray}
    \Phi\left(\Pi_j,\cdots,\Pi_l,  \Pi_j^{\kappa_m}\cdots\Pi_l^{\lambda_m} Z_m, \cdots,\Pi_j^{\kappa_p}\cdots\Pi_l^{\lambda_p} Z_p \right) \nonumber \\
    = \Phi\left(\Pi_j,\cdots,\Pi_l, Z_m, \cdots, Z_p \right). \label{eq:e65}
\end{eqnarray}
As there is a scale transformation under which a variables among $\Pi_j,\cdots,\Pi_l$ changes while the other variables remain unchanged, according to Theorem \ref{thm:2-3}, therefore we obtain,
\begin{equation}
\Phi\left(Z_m, \cdots, Z_p \right) = 0. \label{eq:e66}       
\end{equation}
Defining $\Pi_i =\Pi\left(b_i, a_1,\cdots,a_k \right)$ and $Z_n = Z_n\left( \Pi_n, \Pi_j,\cdots,\Pi_l \right)$, we can see the following hierarchical structure:
\begin{equation}
    \Phi\left[Z\left(\xi \Pi\left( \phi a \right)  \right)\right] =\Phi\left[Z\left( \Pi\left( a \right)  \right)\right]. \label{eq:e67}
\end{equation}
Here we call the similarity parameters $\Pi$ {\it similarity parameters  of the first class}, similarity parameters $Z$ {\it similarity parameters of the second class}, respectively, to reflect the hierarchical structure (Fig.~\ref{fig:F2a}) \cite{Maruoka23,Watanabe25}. The hierarchical structure of invariance of each scale transformations are summarized in the Table \ref{tb:t1}.
\begin{figure}
\centering
  \includegraphics[width=8.5 cm]{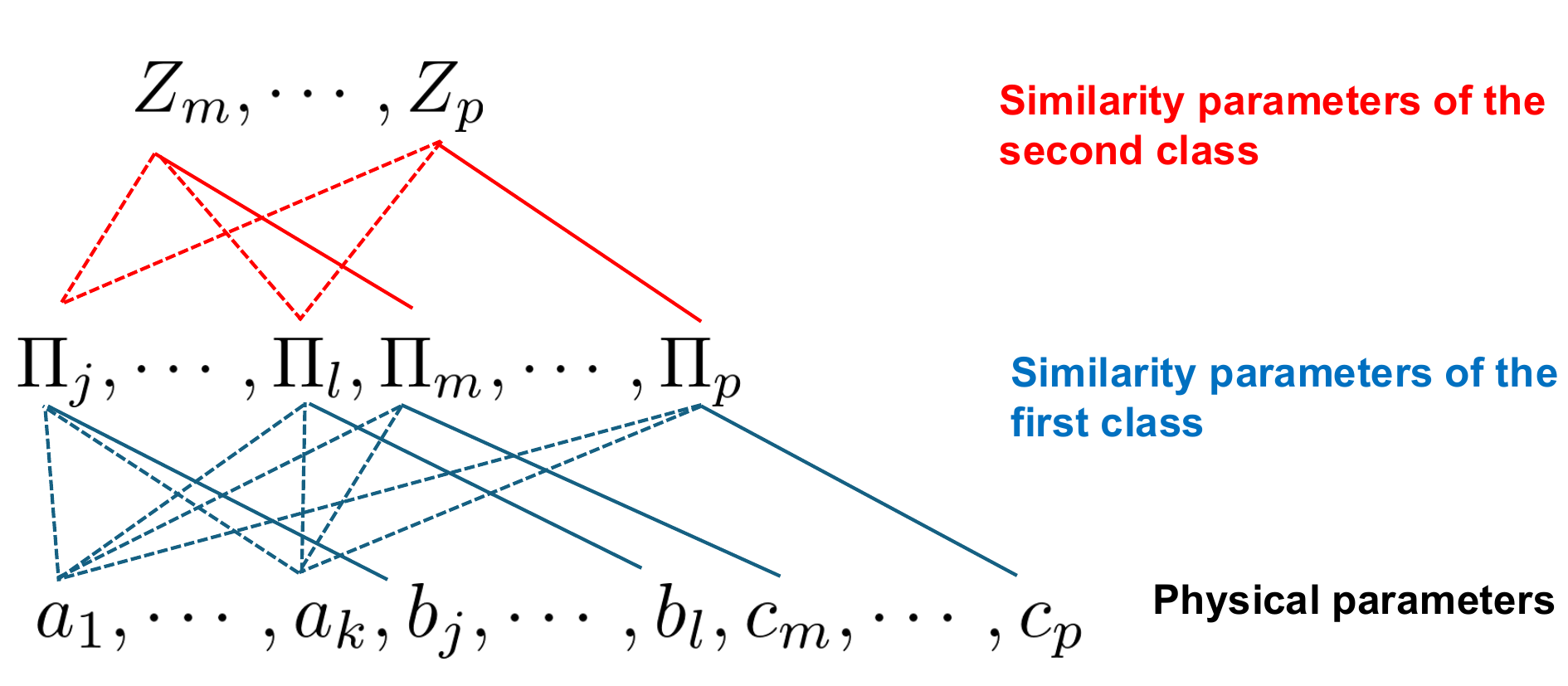} %
 \caption{Hierarchical structure of similarity parameters. Physical parameters $a_1, \cdots, a_k, b_j, \cdots, b_l, c_m, \cdots, c_p$ are quantities composed of a numerical value and a physical unit. They are not invariant under scale transformations of units. $\Pi_j, \cdots, \Pi_l, \Pi_m, \cdots, \Pi_p$ are referred to as similarity parameters of the first class, which are composed of physical parameters and are invariant under scale transformations of {\it units}. $Z_m, \cdots, Z_p$ are referred to as similarity parameters of the second class, which are composed of similarity parameters of the first class and are invariant under all scale transformations, where $1 < k < j < l < m < p$. The dashed lines indicate parameters with functionally independent scale functions, whereas the solid lines indicate the remaining parameters with functionally dependent scale functions. The terms {\it first class} and {\it second class} refer to the hierarchical levels of similarity parameters and should not be confused with similarity of the first kind and similarity of the second kind \cite{Maruoka23}.} 
  \label{fig:F2a}
\end{figure}

Therefore, the forms of Eq.~\ref{eq:e67} and Eq.~\ref{eq:e66} are invariant under any scale transformation, including scale transformations of {\it units} and of {\it physical parameters}. We say that {\it the scale invariance of the system has been fully exploited in $\Phi$}.

\begin{table}[h]
\centering
\begin{tabular}{l|llll}
\hline
 & $\phi$ & $ \sigma $& $\sigma_\phi$ & $\xi$ \vspace{.5mm}\\  
\hline
 ${\rm unit}$ & $\bigcirc$ & $\times$ & $\times$ & $\times$ 
 \vspace{.5mm}\\
Numerical values & $\bigcirc$ & $\bigcirc$  & $\bigcirc$ & $\bigcirc$  \vspace{.5mm}\\
physical parameters & $\times$ & $\bigcirc$ & $\bigcirc$ & $\bigcirc$ \vspace{.5mm}\\
similarity parameters of the first class &  $\times$ &  & $\times$ & $\bigcirc$  \vspace{.5mm} \\
similarity parameters of the second class &$\times$ & $\times$ & $\times$ & $\times$ \vspace{.5mm}\\  
\hline
\end{tabular}
\caption{The table indicates whether the quantities change ($\bigcirc$) or remain unchanged ($\times$) under the application of scale transformations. Here, $\phi$ denotes the scale transformation of units, $\sigma$ denotes the scale transformation of physical parameters, $\sigma_\phi$ represents the scale transformation of physical parameters corresponding to that of the units, and $\xi$ denotes the intrinsic scale transformation of physical parameters.}
\label{tb:t1}
\end{table}

Note that each parameters $Z_m,\cdots,Z_p$ are functionally dependent. Therefore, there exists an explicit function of the form,
\begin{equation}
    Z_m = \Phi_\mathrm{ex}\left(Z_n,\cdots,Z_p \right). \label{eq:e72} 
\end{equation}
while $\Phi_\mathrm{ex}$ cannot be a scale function. In other words, $\Phi_\mathrm{ex}$ cannot be expressed as a power function. That is, there exists no relation of the form,
\begin{equation}
    \Phi_\mathrm{ex}\left(Z_n,\cdots, Z_p \right) \neq Z_n^{c_n} \cdots Z_p^{c_p}. \label{eq:e73}
\end{equation}
This is because all the scale invariance has already been exploited in $\Phi_\mathrm{ex}$. As the power function is scale invariant, if $\Phi_\mathrm{ex}$ could be expressed in the form of power function Eq.~\ref{eq:e73}, some scale invariance would still remain unexploited.

From this formulation, we can now easily understand that Case I in Sec.~\ref{sec:SII} belongs to similarity of the first kind, as its self-similar form is obtained by dimensional analysis. On the other hand, Case II and Case III belong to similarity of the second kind, since their self-similar forms cannot be obtained by dimensional analysis and involve intrinsic scale transformations.

For example, the self-similar structure of case II is as follows. Consider a solid sphere with radius $R$ and density $\rho$ impacting a viscoelastic surface with elastic modulus $E$, viscous coefficient $\mu$, and thickness $h$ at an impact speed $v_i$. Under these conditions, the maximum deformation $\delta_m$ exhibits different scaling laws with $v_i$.

In this problem, the governing function is given by \cite{Maruoka23},
\begin{equation}
    F \left(\delta_m, R, h, \rho, \mu, E, v_i \right) = 0, \label{eq:e43} 
\end{equation}
where the dimension functions are,
\begin{eqnarray}
    \phi_{\delta_m} = L,~ \phi_R=L, \phi_h= L, ~\phi_\rho=L^{-3}M, \nonumber \\
    \phi_\mu= L^{-1}MT^{-1}, ~\phi_E= L^{-1}MT^{-2}, ~\phi_{v_i}=LT^{-1}. \label{eq:e44} 
\end{eqnarray}
Based on Eq.~\ref{eq:e44}, dimensional analysis yields the following dimensionless variables:
\begin{equation}
    \Pi = \frac{\delta_m}{R},~\kappa=\frac{h}{R},~\theta=\frac{\mu}{E^{1/2}\rho^{1/2}R},~\eta= \frac{\rho v_i^2}{E}. \label{eq:e45} 
\end{equation}
Thus, Eq.~\ref{eq:e43} is transformed into the form invariant of scale transformation of units,
\begin{equation}
    \Phi\left(\Pi, \kappa, \theta, \eta \right)=0. \label{eq:e46}
\end{equation}
However, Eq.~\ref{eq:e46} still possesses the following scale transformation:
\begin{equation}
    \xi_\Pi=A,~\xi_\kappa=A^3B^{-1},~\xi_\theta =AB^{-1/2} , ~\xi_\eta = B. \label{eq:e47}
\end{equation}
Based on the scale function of $\xi$, we have the following similarity variables,
\begin{equation}
    \Psi = \frac{\Pi^3}{\kappa \eta},~Z = \frac{\Pi}{\theta \eta^{1/2}}. \label{eq:e48}
\end{equation}
Finally, we obtain
\begin{equation}
    \Phi\left( \Psi, Z \right) =0. \label{eq:e49}
\end{equation}
Since $\Psi$ and $Z$ are functionally dependent, there exists a function $\Psi = \Phi_\mathrm{ex}\left(Z \right)$,
\begin{equation}
    \Phi_\mathrm{ex} \left(Z \right) = \frac{2}{3} \frac{Z}{1- \exp\left(-Z\right) }. \label{eq:e49b}
\end{equation}
In the end, we find that there is a following self-similar structure:
\begin{eqnarray}
\sigma &=& \left\{ AL, L, A^3B^{-1}L, L^{-3}M, AB^{-1/2}L^{-1}MT^{-1}, \right. \nonumber \\
&& \left. L^{-1}MT^{-2}, BLT^{-1} \right\} \label{eq:e50} \\
\phi &=& \left\{ L, L, L, L^{-3}M, L^{-1}MT^{-1}, L^{-1}MT^{-2}, LT^{-1} \right\} \label{eq:e51} \\
\xi &=& \left\{ A, 1, A^3B^{-1}, 1, AB^{-1/2}, 1, B \right\} \label{eq:e52}
\end{eqnarray}
\begin{eqnarray}
    F \left(AL\delta_m, LR,A^3B^{-1}L h,L^{-3}M \rho,  \right. \nonumber \\ 
    \left. AB^{-1/2} L^{-1}M T^{-1} \mu,  L^{-1}M T^{-2} E,BLT^{-1} v_i \right) = 0. \label{eq:e53}   
\end{eqnarray}
Figure \ref{fig:F2} shows the plots of the physical parameters in Eq.~\ref{eq:e43}, similarity parameters of the first class in Eq.~\ref{eq:e45}, and the similarity parameters of the second class in Eq.~\ref{eq:e48}. The scale invariance is only partially exploited in the representation shown in Fig.~\ref{fig:F2} (b), whereas all the data points collapse onto a single curve in the representation of Eq.~\ref{eq:e48} shown in Fig.~\ref{fig:F2} (c). The solid line in Fig.~\ref{fig:F2} (c) corresponds to Eq.~\ref{eq:e49b}. Note that the function $\Phi_\mathrm{ex}$, in which all scale invariance has been exploited, is not a power-law function, as discussed in Eq.~\ref{eq:e73}. 
\begin{figure*}
\centering
\includegraphics[width=12 cm]{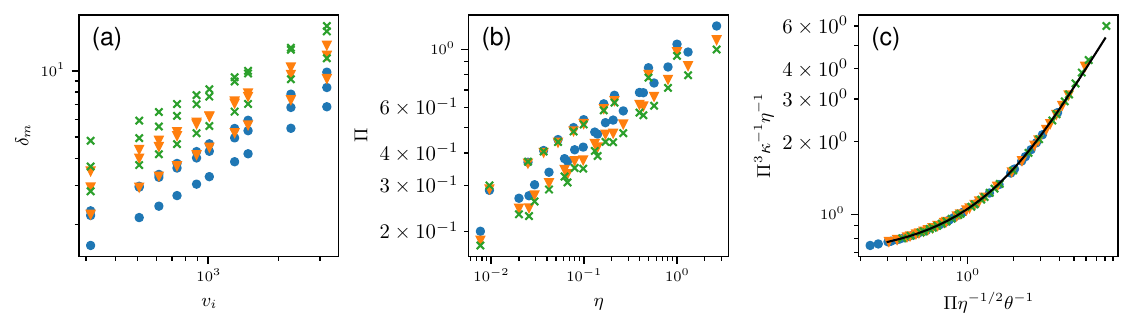} %
\caption{Plots of the physical parameters $\delta_\mathrm{m}$ versus $v_i$ (a), the similarity parameters of the first class, $\Pi=\delta_\mathrm{m}/R$ versus $\eta=\rho v_i^2/E$ (b), and the similarity parameters of the second class, $\Pi^3\kappa^{-1}\eta^{-1}$ versus $\Pi\eta^{-1/2}\theta^{-1}$ (c), for various physical conditions of the dynamical impact of a solid sphere onto a viscoelastic board \cite{Maruoka23}, where $\kappa=h/R$ and $\theta=\mu/(E^{1/2}\rho^{1/2}R)$. The solid line in (c) is described by Eq.~\ref{eq:e49b}.
}
\label{fig:F2}
\end{figure*}

\section{Self-similar solution of the second kind Type A and Type B \label{sec:SV}}

In the previous section, it was demonstrated that self-similarity can be classified according to how the scale invariance of physical parameters and that of units overlap. Self-similar solutions of the second kind are solutions that involve intrinsic scale invariance. However, there is a further classification within similarity of the second kind.
In similarity of the second kind, the exponents of the similarity parameters cannot be obtained by dimensional analysis alone because an intrinsic scale transformation is present. However, there are cases in which these exponents are fixed constants and cases in which they can be expressed as functions of other dimensionless parameters. The latter corresponds to cases in which the scale functions involve dimensionless parameters:
\begin{equation}
     \varsigma_i\left( A_1,\cdots, A_k \right) =A_1^{\kappa\left( c\right)_{i1}}\cdots A_k^{\lambda\left( c\right)_{ik}} \label{eq:e74b}
\end{equation}
where $c$ is dimensionless number. Therefore, the similarity parameters are constructed as,
\begin{equation}
    Z_m= \frac{\Pi_m}{\Pi_j^{\kappa \left( c \right)_m}\cdots\Pi_l^{\lambda\left( c \right)_m} },\cdots, Z_p= \frac{\Pi_p}{\Pi_j^{\kappa\left( c \right)_p}\cdots\Pi_l^{\lambda\left( c \right)_p} }.\label{eq:e74}
\end{equation}
This case corresponds to the Barenblatt solution of the modified ground water problem \cite{Barenblatt2003}. Note that the power exponents of similarity parameters are not constant but take continuous values as functions of another dimensionless numbers. Here we define a further subdivision of self-similar solution of the second kind as follows,
\begin{definition}\label{dfn:d4}
Suppose that a problem belongs to the similarity of the second kind. If the exponents appearing in the scale functions and in the similarity parameters are constant, the corresponding solution is called a self-similar solution of the second kind Type A. On the other hand, if the exponents are not constant but are functions of additional dimensionless parameters and therefore form a continuous values, the corresponding solution is called a self-similar solution of the second kind Type B.
\end{definition}

The origin of this division can be understood by the formulation by Eggers and Fontelos where they distinguish the kinds of self-similarity according to whether self-similar solutions exist for a particular pair of exponents or for a continuous range of exponents whose values are determined by a regularity condition\footnote{See p.46 in Ref.~\cite{Eggers2015}}. 


For instance, let us introduce the ansatz $t^\alpha f\left(\frac{x}{t^\beta} \right)$ into the PDE $\partial_t u = u^2$, we obtain, 
\begin{equation}
  \alpha t^{\alpha-1}f\left(\zeta \right)- \beta t^{\alpha-1}\zeta \frac{ \partial f\left(\zeta \right)}{\partial \zeta} = t^{2 \alpha}  f^2 \left(\zeta \right).  \label{eq:e76a}    
\end{equation}
where $\zeta = x/t^\beta$. Eq. \ref{eq:e76a} should be the differential equation of $\zeta$, therefore $t$ must vanish, thus $\alpha=1$ then we obtain,
\begin{equation}
  - f\left(\zeta \right) - \beta \zeta \frac{ d f\left(\zeta \right)}{d \xi} =  f^2\left(\zeta \right). \label{eq:e76}   
\end{equation}
Note that the power exponents $\alpha =1$ is uniquely determined, whereas the other power exponent $\beta$ cannot be determined solely by introducing the ansatz. Instead, $\beta$ must be matched with another dimensionless number by a regularity condition \cite{Eggers2015}. Such power exponents may be determined from a nonlinear eigenvalue problem \cite{Barenblatt1996,Barenblatt2003} or the renormalization group method \cite{Goldenfeld1992,CGO1996,Kunihiro2022}. In this case, the power exponents are undetermined by the local structure of equation. This case corresponds to self-similar solutions of the second kind Type B. 

Based on definition \ref{dfn:d3} and \ref{dfn:d4}, self-similar solutions can be summarized into three groups, as shown in Fig.~\ref{fig:F3}. We can see that the three kinds of self-similar solutions can be clearly distinguished by their scale functions. If a system possesses only the scale functions derived from the structure of units, namely $\phi$ and $\sigma_\phi$, then the system belongs to similarity of the first kind (case I). On the other hand, if a system possesses not only the scale functions derived from units but also intrinsic scale functions associated with physical parameters, namely $\phi$, $\sigma_\phi$, and $\xi$, then it belongs to similarity of the second kind. Within this category, if the exponents of $\xi$ are constant, the system belongs to a self-similar solution of the second kind Type A (case II). If the exponents of $\xi$ are functions of additional dimensionless parameters and therefore take values from a continuous range, the system belongs to a self-similar solution of the second kind Type B (case III).
\begin{figure}
\centering
\includegraphics[width=8.5 cm]{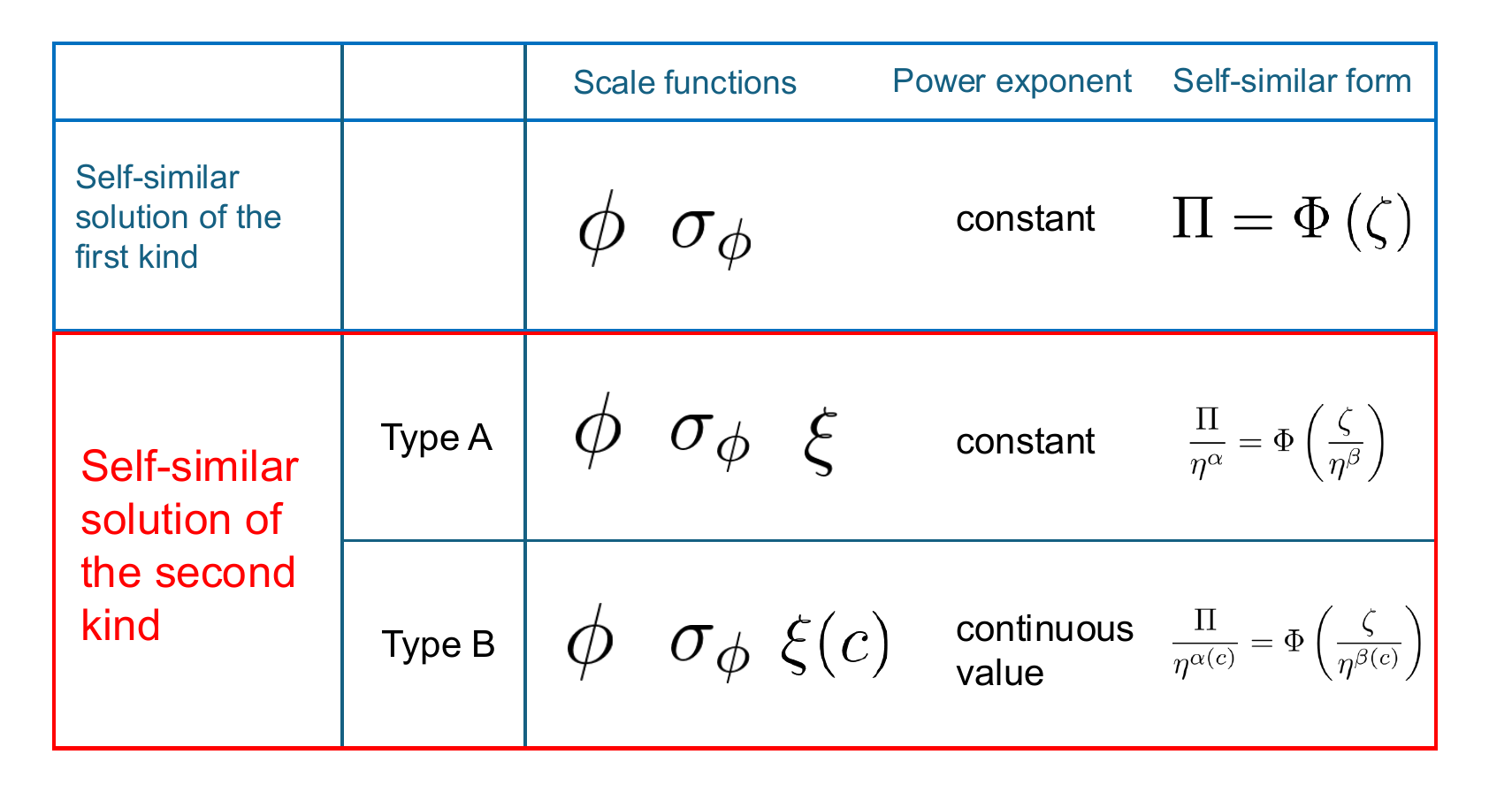}
\caption{Summary of the three kinds of self-similar solutions. A self-similar solution of the first kind is characterized by the absence of an intrinsic scale function $\xi$, whereas a self-similar solution of the second kind possesses an intrinsic scale function $\xi$. Self-similar solutions of the second kind can be further subdivided into Type A and Type B according to whether the exponents appearing in $\xi$ are constant or depend on additional dimensionless parameters.}
\label{fig:F3}
\end{figure}

This subdivision follows the viewpoint of Eggers and Fontelos. Their formulation is based on the transformation of a PDE into an ODE through a self-similar ansatz. Therefore, it may appear specific to problems governed by PDEs. Its applicability to problems for which a PDE is not the fundamental governing equation is less obvious. In this sense, this subdivision relies more on the form of the self-similar solution itself.

The formulation of similarity of the first kind and similarity of the second kind originates from the structure of the numerical-value function $N$ (Eq.~\ref{eq:e25}), which does not distinguish between scale transformations that are physically different operations. No one would regard a change of units and a scaling of physical parameters as the same physical operation, yet they are mathematically indistinguishable. Since the structure of the scale functions associated with units is known {\it a priori}, we can exploit this information to infer the scale invariance of physical parameters as well. In a sense, the coincidence between the scale invariance of units and that of physical parameters is accidental. Most problems belong to similarity of the second kind, for which these two scale invariances are not equivalent.

The distinction between similarity of the first kind and self-similar solutions of the second kind Type A is more subtle. However, the $LMT$ unit system is generally employed. This means that the number of parameters that can be eliminated by dimensional analysis is already fixed, and at most three independent parameters can be removed\footnote{The number of parameters that can be reduced is equivalent to the dimension of the orbit of the scale transformation acting on the parameters. See pp.~86--89 of Ref.~\cite{Olver1986}.}. Consequently, if a problem involves only a small number of governing parameters, it is more likely to reduce to similarity of the first kind. This observation poses a question concerning the choice of unit systems: is the definition of units unique? Is there an alternative system of units under which a larger class of problems would belong to similarity of the first kind? If one adopts a unit system with a higher-dimensional orbit, more parameters could in principle be removed by dimensional analysis. However, considering Case II, the scale factors $A$ and $B$ originate from the underlying physical model, namely the Maxwell viscoelastic foundation model \cite{Maruoka23}. It therefore appears difficult to introduce another universal dimension that would systematically absorb such intrinsic scale transformations. Nevertheless, this remains an open and interesting question.

\section{Conclusion \label{sec:SVI}}

In this work, based on scale invariance, we formulated the concepts of exploitation of self-similarity, similarity of the first kind, and similarity of the second kind. By defining scale transformations and their scale functions, we showed that the exploitation of self-similarity can be understood as a transformation into a form invariant under all scale transformations. After formalizing this concept, we introduced the structure of physical parameters, which consist of the product of numerical values and units. This formulation leads to the observation that there are two kinds of scale transformations acting on physical parameters: changes of units and scale transformations of physical parameters themselves. We then formulated dimensional analysis as a transformation of functions into forms invariant under scale transformations of units. Furthermore, we demonstrated that dimensional analysis fully exploits self-similarity only for problems in which the scale invariance of units coincides with that of physical parameters. This insight leads naturally to the definitions of similarity of the first kind and similarity of the second kind according to whether self-similarity is fully exploited by dimensional analysis. We further classified self-similar solutions of the second kind according to whether the exponents of the similarity parameters depend on dimensionless parameters.

The distinction between similarity of the first kind and similarity of the second kind originates from the structure of physical parameters defined in Eq.~\ref{eq:e25}. From this formulation, we find that there are two kinds of scale transformations acting on physical parameters. One acts on the physical parameters themselves, whereas the other acts on their units. This distinction leads to scale invariances of different origins. However, since the structure of scale transformations of units is known {\it a priori}, we can transform functions into forms invariant under scale transformations of units. Owing to the indistinguishability of numerical values, this scale invariance can also be shared by physical parameters. This is the reason why dimensional analysis works for physical problems.

Through this formulation, we find that there are three kinds of self-similar solutions in general. This classification is sufficiently broad that all self-similar physical problems may be assigned to one of these categories. The formulation provides a clearer perspective on self-similar phenomena. It is highly useful to identify the type of self-similarity exhibited by a given problem. We hope that this framework will serve as a useful guide for understanding self-similarity in a wide range of physical systems.

\backmatter

\bmhead{Acknowledgements}

The work was supported by
Grants-in-Aid of MEXT, Japan for Scientific Research, Grant
No. JP24K17022. The author wishes to thank S. Mandre, A. Saxena, and Y. Tagawa for fruitful discussions, as well as the participants of the minicourse Self-Similarity and Scaling for Physicists Working with Numerical Data at OIST. The author is especially grateful to Y. Hirono for many patient and insightful discussions that led to the formulation of physical parameters developed in this paper. 

\bmhead{Author Contribution Statement}

The author confirms sole responsibility for the following: study conception and design, data processing, analysis and interpretation of results, and manuscript preparation.

\bmhead{Data availability}

My manuscript has no associated data.

\begin{appendices}

\section{Proof on the scale functions} \label{sec:appendix_A}

In this appendix, we prove the following theorem: the function that determines the factor acting on the parameters under scale transformation, is called {\it scale function}, and it is always powerl-law monomials.
\begin{equation}
 \sigma_i\left( A_1,\cdots, A_k \right) =A_1^{\alpha_{i1}}\cdots A_k^{\alpha_{ik}}. \label{eq:e3}
\end{equation}

\begin{proof}
     
Let us consider the scale transformation with scale factor $A_1, \cdots, A_k$ by $\sigma\left( A_1, \cdots, A_k\right)$ and another scale transformation with scale factor $B_1, \cdots, B_k$ by $\sigma\left( B_1, \cdots, B_k\right)$ as follows,
 \begin{eqnarray}
     a_i^A &=& \sigma_i\left( A_1,\cdots, A_k \right) a_i  \label{eq:e4}\\
     a_i^B &=& \sigma_i\left( B_1,\cdots, B_k \right) a_i \label{eq:e5}.
 \end{eqnarray}
 We assume scale functions $\sigma_i$ are continuously differentiable.

If we consider the scale transformation from $a_i^B$ to $a_i^A$, this can be realized by the following transformation,
\begin{equation}
    a_i^A = \sigma_i\left( \frac{A_1}{B_1},\cdots, \frac{A_k}{B_k} \right) a_i^B.  \label{eq:e6}
\end{equation}
Using Eq.~\ref{eq:e4},\ref{eq:e5},\ref{eq:e6}, we get the following functional equation,
\begin{eqnarray}
    \frac{\sigma_i\left( A_1,\cdots, A_k \right)}{\sigma_i\left( B_1,\cdots, B_k \right)} &=& \sigma_i\left( \frac{A_1}{B_1},\cdots, \frac{A_k}{B_k} \right). \label{eq:e7}  
\end{eqnarray}
Now let us show that the solution of Eq.~\ref{eq:e7} is Eq.\ref{eq:e3}.

Let us differentiate Eq.~\ref{eq:e7} with respect to $A_1$,
\begin{equation}
    \frac{\partial }{\partial A_1} \frac{\sigma_i\left( A_1,\cdots, A_k \right)}{\sigma_i\left( B_1,\cdots, B_k \right)} = \frac{1}{B_1} \frac{\partial}{\partial A_1/B_1} \sigma_i\left( \frac{A_1}{B_1},\cdots, \frac{A_k}{B_k} \right). \nonumber
\end{equation}
Taking $B_1 = A_1, \cdots, B_k = A_k$, we have
\begin{equation}
     \frac{\frac{\partial \sigma_i\left( A_1,\cdots, A_k \right) }{\partial A_1}}{\sigma_i\left( A_1,\cdots, A_k \right)} = \frac{1}{A_1} \frac{\partial}{\partial A_1/B_1} \sigma_i\left( \frac{1}{1},\cdots, \frac{1}{1} \right). \nonumber
\end{equation}
By introducing 
\begin{equation}
      \frac{\partial}{\partial A_1/B_1} \sigma_i\left( 1,\cdots, 1 \right) = \alpha_{i1} \nonumber
\end{equation}
then we have the following relation,
\begin{equation}
    \int \frac{\partial \sigma_i\left( A_1,\cdots, A_k \right)}{\sigma_i\left( A_1,\cdots, A_k \right)} = \alpha_{i1} \int \frac{\partial A_1}{A_1}. \label{eq:e8}
\end{equation}
The solution of Eq.~\ref{eq:e8} will be 
\begin{equation}
     \sigma_i\left( A_1,\cdots, A_k \right) = A_1^{\alpha_{i1}} C \left(A_2, \cdots, A_k \right). \label{eq:e9}
\end{equation}
If we repeat this for $A_2,\cdots, A_k$, we have
\begin{equation}
     \sigma_i\left( A_1,\cdots, A_k \right) = {\rm const}~A_1^{\alpha_{i1}}\cdots A_k^{\alpha_{ik}}. \label{eq:e10}
\end{equation}
Introducing ${\rm const} =1$, we have
\begin{equation}
     \sigma_i\left( A_1,\cdots, A_k \right) = A_1^{\alpha_{i1}} \cdots A_k^{\alpha_{ik}}. \nonumber
\end{equation}
which is Eq.~\ref{eq:e3}. \end{proof}

\section{The proof of thorem on functionally independent scale functions}
\label{sec:appendix_B}

Here we prove Theorem 1: A scale transformation is always possible such that any quantity, say $x_1$, in the set of parameters  with functionally independent scale functions $x_1,\cdots, x_k$ changes its numerical value by a specified factor $A$ while the other quantities remain unchanged.

\begin{proof}
    
Let us consider the scale transformation having the orbits $A_1,\cdots, A_k$ acting on $x_1,\cdots,x_k$,
\begin{eqnarray}
        \sigma_1\left(A_1,\cdots,A_k\right) &=& A_1^{\alpha_{11}}, \cdots, A_k^{\alpha_{1k}},\nonumber \\
        \vdots  \nonumber \\ 
        \sigma_{k}\left( A_1,\cdots,A_k\right) &=& A_1^{\alpha_{k1}}\cdots A_k^{\alpha_{kk}}.
\end{eqnarray}
    Then we obtain as follows,
\begin{eqnarray}
        x_1^{'} &=& x_1 A_1^{\alpha_{11}}\cdots A_k^{\alpha_{1k}} \nonumber \\ 
        x_2^{'} &=& x_2 A_1^{\alpha_{21}}\cdots A_k^{\alpha_{2k}} \nonumber \\ 
        \vdots \nonumber \\
        x_k^{'} &=& x_k A_1^{\alpha_{k1}}\cdots A_k^{\alpha_{kk}}. \label{eq:e12}
\end{eqnarray}

Now let us the transformation as it is illustrated, such that any quantity, say $x_1$, in the set of parameters  with functionally independent scale functions $x_1,\cdots, x_k$ changes its numerical value by a specified factor $A$ while the other quantities remain unchanged,
\begin{equation}
    x_1^{'} =A x,~x_2^{'} = x_2,\cdots, x_k^{'} = x_k. \label{eq:e13}
\end{equation}
Therefore, we have,
\begin{eqnarray}
     A_1^{\alpha_{11}}\cdots A_k^{\alpha_{1k}}=A,~A_1^{\alpha_{21}}\cdots A_k^{\alpha_{2k}}  = 1,\cdots, \nonumber \\ 
     A_1^{\alpha_{21}}\cdots A_k^{\alpha_{Nk}} = 1. \label{eq:e14}
\end{eqnarray}
By taking logarithms, we obtain a system of linear equations,
\begin{eqnarray}
    \alpha_{11} \ln A_1+ \alpha_{12} \ln A_2 \cdots \alpha_{1k} \ln A_k &=& \ln A \nonumber \\ 
    \alpha_{21} \ln A_1+ \alpha_{22} \ln A_2 \cdots \alpha_{2k} \ln A_k &=& 0  \nonumber \\ 
    \vdots \nonumber \\
    \alpha_{k1} \ln A_1+ \alpha_{k2} \ln A_2 \cdots \alpha_{kk} \ln A_k &=& 0. \label{eq:e15}
\end{eqnarray}
The system of Eqs.~\ref{eq:e15} has at least one solution. Indeed,  it is insoluble only if the left-hand side of the first equation is a linear combination of the left-hand sides of the remaining equations,
\begin{eqnarray}
    \alpha_{11} \ln A_1+ \alpha_{12} \ln A_2 \cdots \alpha_{1k} \ln A_k \nonumber \\
    = c_2 \left( \alpha_{21} \ln A_1+ \alpha_{22} \ln A_2 \cdots \alpha_{2k} \ln A_k \right) \nonumber \\
    + \cdots + c_k\left(  \alpha_{k1} \ln A_1+ \alpha_{k2} \ln A_2 \cdots \alpha_{kk} \ln A_k \right). \label{eq:e16}
\end{eqnarray}
Therefore, we have
\begin{equation}
    \sigma_1\left(A_1,\cdots,A_k\right) = \sigma_2\left(A_1,\cdots,A_k\right)^{c_2} \cdots \sigma_k\left(A_1,\cdots,A_k\right)^{c_k}. \label{eq:e17} 
\end{equation}
However, Eq.~\ref{eq:e17} is inconsistent with the fact that $\sigma_1,\cdots, \sigma_k$ are functionally independent as they are expressed by a scale functon of the others. Therefore, the theorem was proved. 
\end{proof}

\section{A formal definition of the scale invariance generated by a set of scale functions}\label{sec:appendix_Ca}

Suppose that ${\bf p}$ and ${\bf q}$ are vectors of power exponents of two similarity parameters. There is a vector ${\bf c}$ such that are orthogonal to ${\bf p}$ and ${\bf q}$,  ${\bf cp } = {\bf cq }= 0$. Then the scale invariance of a system is characterized by the following vector space,  
\begin{equation}
\begin{aligned}    
{\mathcal C} 
&\coloneqq 
{\rm span}
\{ {\bf c}^{(\alpha)} \}_{\alpha = 1, \ldots, p -2} 
\\
&= 
({\rm span}\{\bf p\})^\perp
\cap 
({\rm span}\{\bf q\})^\perp
\\
&= 
({\rm span}\{\bf p, \bf q \})^\perp,
\end{aligned} \label{eq:eac1}
\end{equation}
where  $\bf V^\perp$ denotes the orthogonal complement of a vector space $\bf V$ \cite{Watanabe25}.

For example, the scale functions of $\xi$ in case II was described in Eq.~\ref{eq:e47} and their similarity varibales are Eq.~\ref{eq:e48}. The scale invariance of $\xi$ is characterized by the following vector space,
\begin{equation}
    c_1\left[p_\Pi,p_\kappa,p_\eta,p_\theta \right]_\Psi + c_2 \left[q_\Pi,q_\kappa,q_\eta,q_\theta \right]_Z = c_1\left[3,-1,-1,0 \right]_{\Psi}+c_2\left[1,0,-\frac{1}{2}, -1 \right]_Z, \label{eq:eac2} 
\end{equation}
where $\left[p_\Pi,p_\kappa,p_\eta,p_\theta \right]_\Psi$ and $\left[q_\Pi,q_\kappa,q_\eta,q_\theta \right]_Z$ are power exponents of $\Pi, \kappa, \eta, \theta$ of $\Psi$ and $Z$ respectively. Figure \ref{fig:FC1} shows the data collapse using different choices of similarity parameters of the second class.
\begin{figure*}
\centering
\includegraphics[width=12 cm]{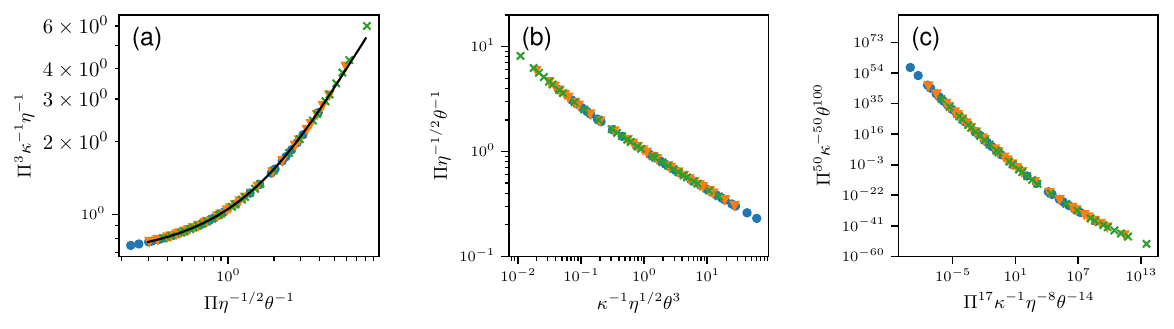} %
\caption{The plots using different choice of similarity parameters of the second class, $\Pi^3 \kappa^{-1} \eta^{-1}$ ($c_1=1, c_2=0$) vs $\Pi\eta^{-1/2}\theta^{-1}$  ($c_1=0, c_2=1$) (a), $\Pi \eta^{-1/2}\theta^{-1}$ ($c_1=0, c_2=1$) vs $\kappa^{-1} \eta^{1/2} \theta^3$ ($c_1=1, c_2=-3$) (b), $\Pi^{50}\kappa^{-50}\theta^{100}$ ( $c_1=50, c_2=-100$) vs $\Pi^{17}\kappa^{-1}\eta^{-8}\theta^{-14}$ ($c_1=1, c_2=14$) (c). Note that all the variation of similarity parameters are obeying the column vectors spanned by ${\bf p}$ and ${\bf q}$ as Eq.~\ref{eq:eac2}.
}
\label{fig:FC1}
\end{figure*}

\section{The equivalence of definitions by Barenblatt }\label{sec:appendix_C}

According to Barenblatt's formulation, similarity of the second kind are defeined as the function of similarity parameters of the first class $\Phi\left(\zeta, \eta \right)$ does not converge to a finite limit as $\eta$ goes to 0 or infinity but the its convergence is recovered by introducing similarity parameters of the second class as $\Phi\left( \zeta/ \eta^\beta \right)/\eta^\alpha$. We will demonstrate that this formulation is equivalent to our formulation as follows.

Let us start from the exploited self-similar form in Eq.~\ref{eq:e67}. Similarity parameter $Z_m,\cdots,Z_p$ in are functionally dependent, therefore we rewrite the function as 
\begin{eqnarray}
    \frac{\Pi_m}{\Pi_j^{\kappa_m} \cdots \Pi_l^{\lambda_m} } = \Phi_\mathrm{ex}\left( \frac{\Pi_n}{\Pi_j^{\kappa_n} \cdots \Pi_l^{\lambda_n} },\cdots, \frac{\Pi_p}{\Pi_j^{\kappa_p} \cdots \Pi_l^{\lambda_p} } \right) \label{eq:eC1}
\end{eqnarray}
If we rewrite the function having similarity parameters of the first class, this will be
\begin{eqnarray}
    \Pi_m =F \left(\Pi_j,\cdots,\Pi_l,\Pi_n, \cdots,\Pi_p \right)= \nonumber \\
    \Pi_j^{\kappa_m} \cdots \Pi_l^{\lambda_m} \Phi_\mathrm{ex}  \left( \frac{\Pi_n}{\Pi_j^{\kappa_n} \cdots \Pi_l^{\lambda_n} },\cdots, \frac{\Pi_p}{\Pi_j^{\kappa_p} \cdots \Pi_l^{\lambda_p}} \right).\label{eq:eC2}
\end{eqnarray}
As a scale transformation acts on Eq.~\ref{eq:eC2}, then we have
\begin{equation}
   \left( \varsigma_j \Pi_j\right)^{\kappa_m} \cdots \left( \varsigma_l \Pi_l\right)^{\lambda_m} \Phi_\mathrm{ex}  \left( \frac{\xi_n \Pi_n}{\left( \varsigma_j \Pi_j\right)^{\kappa_n} \cdots \left( \varsigma_l\Pi_l\right)^{\lambda_n} },\cdots, \frac{\xi_p \Pi_p}{\left( \varsigma_j \Pi_j\right)^{\kappa_p} \cdots \left( \varsigma_l \Pi_l\right)^{\lambda_p}} \right). \label{eq:eC3}
\end{equation}
The limit of $\Pi_j, \cdots, \Pi_l \to 0~{\rm or}~\infty$ corresponds to $\varsigma_j,\cdots,\varsigma_l \to 0~{\rm or}~\infty$, thus we obtain
\begin{equation}
\begin{aligned}
\left( \varsigma_j \Pi_j\right)^{\kappa_m} \cdots \left( \varsigma_l \Pi_l\right)^{\lambda_m} \Phi_\mathrm{ex} \left( \frac{\xi_n \Pi_n}{\left( \varsigma_j \Pi_j\right)^{\kappa_n} \cdots \left( \varsigma_l\Pi_l\right)^{\lambda_n} },\cdots, \frac{\xi_p \Pi_p}{\left( \varsigma_j \Pi_j\right)^{\kappa_p} \cdots \left( \varsigma_l \Pi_l\right)^{\lambda_p}} \right) \xrightarrow[\varsigma_j,\cdots,\varsigma_l \to 0~{\rm or}~\infty]{} 0~{\rm or}~\infty. \label{eq:eC4}
\end{aligned}
\end{equation}
As the function of similarity of the first class Eq.~\ref{eq:eC4} goes to 0 of infinity in the limit of scale factors goes to 0 or infinity. 
However, as for the function of the second class, it will be 
\begin{equation}
\begin{aligned}
   \frac{ F \left(\varsigma_j\Pi_j,\cdots,\varsigma_l \Pi_l,\xi_n \Pi_n, \cdots,\xi_p \Pi_p \right)}{\left( \varsigma_j \Pi_j\right)^{\kappa_m} \cdots \left( \varsigma_l \Pi_l\right)^{\lambda_m}} = \Phi_\mathrm{ex} \left( \frac{\xi_n \Pi_n}{\left( \varsigma_j \Pi_j\right)^{\kappa_n} \cdots \left( \varsigma_l\Pi_l\right)^{\lambda_n} },\cdots, \frac{\xi_p \Pi_p}{\left( \varsigma_j \Pi_j\right)^{\kappa_p} \cdots \left( \varsigma_l \Pi_l\right)^{\lambda_p}} \right) \nonumber \\ = \Phi_\mathrm{ex}  \left( \frac{\Pi_n}{\Pi_j^{\kappa_n} \cdots \Pi_l^{\lambda_n} },\cdots, \frac{\Pi_p}{\Pi_j^{\kappa_p} \cdots \Pi_l^{\lambda_p}} \right)\xrightarrow[\varsigma_j,\cdots,\varsigma_l \to 0~{\rm or}~\infty]{} {\rm const}. \label{eq:eC5}
\end{aligned}
\end{equation}
As Eq.~\ref{eq:eC5} shows, the convergence of the function is recovered by introducing the similarity parameters of the second class.

\end{appendices}

\bibliography{Ref} 

\end{document}